\documentclass[11pt,english]{article}
\usepackage{graphicx}
\usepackage{xcolor}
\usepackage{amstext}
\usepackage{caption}
\usepackage{etoolbox}
\usepackage{makeidx}
\include{epsf}
\usepackage{braket}
\usepackage{amsmath,amssymb,epsfig}
\usepackage{amscd}
\usepackage{amsthm}
\usepackage{mathrsfs}
\usepackage{dsfont}
\usepackage{array,boldline,makecell,booktabs}
\usepackage{amsfonts}
\usepackage{authblk} 
\usepackage{sectsty}
\usepackage{amsmath,amssymb,epsfig}
\usepackage{mathtools}
\usepackage{amscd}
\usepackage{amsthm}
\usepackage{mathrsfs}
\usepackage{dsfont}
\usepackage[applemac]{inputenc}
\usepackage[english]{babel}
\usepackage{enumitem} 
\usepackage[]{latexsym}
\usepackage{caption}
\usepackage{hyperref}
\hypersetup{
pdftitle={},%
pdfauthor={},%
pdfsubject={},%
pdfkeywords={},%
colorlinks=true,%
linkcolor=blu,%
citecolor=crimson,%
linktocpage=true,%
pageanchor=true
}


\definecolor{crimson}{rgb}{0.7, 0.08, 0.24}
\definecolor{blu}{rgb}{0.0, 0.18, 0.65} 

\usepackage{graphicx}

\DeclareFontFamily{U}{MnSymbolC}{}
\DeclareSymbolFont{MnSyC}{U}{MnSymbolC}{m}{n}
\DeclareFontShape{U}{MnSymbolC}{m}{n}{
    <-6>  MnSymbolC5
   <6-7>  MnSymbolC6
   <7-8>  MnSymbolC7
   <8-9>  MnSymbolC8
   <9-10> MnSymbolC9
  <10-12> MnSymbolC10
  <12->   MnSymbolC12}{}
\DeclareMathSymbol{\intprod}{\mathbin}{MnSyC}{'270}

\newcommand*{\affaddr}[1]{#1}
\newcommand*{\affmark}[1][*]{\textsuperscript{#1}}

\newtheorem{remark}{Remark}

\newtheorem{theorem}{Theorem}

\newtheorem*{proof*}{Proof}


\newcommand{\be}{\begin{equation}}

\newcommand{\ee}{\end{equation}}
\newcommand{\bes}{\begin{eqnarray}}
\newcommand{\ees}{\end{eqnarray}}
\def\bean{\begin{eqnarray*}}
\def\eean{\end{eqnarray*}}

\newcommand{\dd}{\mathrm{d}}
\newcommand{\gi}{\mathfrak{g}}

\newcommand{\compcent}[1]{\vcenter{\hbox{$#1\circ$}}}
\newcommand{\comp}{\mathbin{\mathchoice
  {\compcent\scriptstyle}{\compcent\scriptstyle}
  {\compcent\scriptscriptstyle}{\compcent\scriptscriptstyle}}}

\textheight 22cm \textwidth 16cm
\parskip 1ex
\oddsidemargin 8.5pt \evensidemargin 8.5pt \topmargin -30pt \jot = .5ex
\parskip 1ex

\renewenvironment{thebibliography}[1]
         {\section*{References}\frenchspacing\small
          \begin{list}{[\arabic{enumi}]}
         {\usecounter{enumi}\parsep=2pt\topsep 0pt
         \settowidth{\labelwidth}{[#1]}
         \leftmargin=\labelwidth\advance\leftmargin\labelsep
         \rightmargin=0pt\itemsep=1pt\sloppy}}{\end{list}}

 \numberwithin{equation}{section}

\captionsetup{tableposition=top,figureposition=bottom,font=footnotesize}

\input xy
\xyoption{all}


\title{\textbf{Covariant Momentum Map Thermodynamics for Parametrized Field Theories}\vspace{0.25cm}}

\author{
Goffredo Chirco \affmark[1,\&]\footnote{\texttt{goffredo.chirco@aei.mpg.de}}, Marco Laudato \affmark[2]\footnote{\texttt{marco.laudato@graduate.univaq.it}}, and Fabio M. Mele\affmark[3]\footnote{\texttt{fabio.mele@physik.uni-regensburg.de}}\\
\vspace{0.35cm}
\affaddr{\affmark[1]{\normalsize{\textit{Max Planck Institute for Gravitational Physics, Albert Einstein Institute,}}}}\\
\affaddr{\normalsize{\textit{{Am M\"uhlenberg 1, 14476, Potsdam, Germany.\vspace{0.15cm}}}}\\ $^{\&}$Romanian Institute of Science and Technology (RIST), Cluj-Napoca, Romania. }\\
\affaddr{\affmark[2]{\normalsize{\textit{Dipartimento di Ingegneria e Scienze dell'Informazione e Matematica, Universit\'a degli}}}}\\
\affaddr{\normalsize{\textit{{Studi dell' Aquila, Via Vetoio (Coppito 1), 67100 Coppito, L'Aquila, Italy.\vspace{0.15cm}}}}}\\
\affaddr{\affmark[3]{\normalsize{\textit{Institute for Theoretical Physics, University of Regensburg,}}}}\\
\affaddr{\normalsize{\textit{Universit\"atsstra\ss e 31, 93040 Regensburg, Germany.}}}
}

\date{}

\begin{document}

\maketitle

\vspace{-0.25cm}

\begin{abstract}

%


A general-covariant statistical framework capable of describing classical fluctuations of the gravitational field is a thorny open problem in theoretical physics, yet ultimately necessary to understand the nature of the gravitational interaction, and a key to quantum gravity. Inspired by Souriau's symplectic generalization of the Maxwell-Boltzmann-Gibbs equilibrium in Lie group thermodynamics, we investigate a \emph{spacetime-covariant} formulation of statistical mechanics for parametrized first order field theories, as a simplified model sharing essential general covariant features with canonical general relativity. 
Starting from a covariant multi-symplectic phase space formulation, we define a general-covariant notion of Gibbs state in terms of the \emph{covariant momentum map} associated with the lifted action of the diffeomorphisms group on the extended phase space. We show how such a covariant notion of equilibrium encodes the whole information about symmetry, gauge and dynamics carried by the theory, associated to a canonical spacetime foliation, where the covariant choice of a reference frame reflects in a Lie algebra-valued notion of \emph{local} temperature. We investigate how \emph{physical} equilibrium, hence time evolution, emerges from such a state and the role of the gauge symmetry in the thermodynamic description.


\end{abstract}

\newpage
\tableofcontents

\section{Introduction}
In Einstein's theory of General Relativity, spacetime geometry is described as a dynamical field. As any field, spacetime geometry is subject to local laws of propagation and coupling and, in principle, at some scale and within some extreme regime, it may fluctuate. In classical statistical field theory, we are used to describe fields fluctuations over some background geometry, however we currently lack a spacetime-covariant statistical mechanics capable of describing fluctuations of the spacetime itself (hence beyond relativistic statistical mechanics). This is in fact a thorny open issue in theoretical physics, as it requires to reconsider the whole statistical-mechanic formalism in a timeless framework. 
At the heart of the problem lies the conceptual clash of Einstein's general covariant scheme, characterised by systems with vanishing canonical Hamiltonian, with statistical mechanics, whose classical formulation grounds on the notion of Hamiltonian time flow and energy \cite{39}.
Generally covariant theories, indeed, have no notion of distinguished physical time with respect to which everything evolves, while being completely characterised by a vanishing Hamiltonian constraint. 
Accordingly, in General Relativity, spacetime as a field does not evolve, but it can be viewed itself as the evolution of (three dimensional) space, \emph{constrained} as to satisfy the Einstein equation \cite{Giulini}.

The task of this work is to move a step toward  a precise statistical description of spacetime as a fluctuating field, consistently with the general-covariant scheme of General Relativity. Previous work in this direction has focused on the relation between statistical equilibrium and the problem of time in a general-covariant setting, heavily relying on the insights provided by the thermal-time hypothesis \cite{43b,43c,43e, ch, rsm, ko, cko, ck, hr, 19, 19a, kotecha}, and  seeking characterizations of equilibrium, at the classical and quantum level, by means of a covariant extension of the fundamental postulates of ordinary statistical physics. Such results mostly considered simple low dimensional parametrized mechanical systems, while a field-theoretic formulation of the problem has been initiated in \cite{43}.

Here, we concentrate on the peculiarities of the field-theoretic description, while trying to keep conceptually separate the problem of a spacetime covariant definition of statistical state and equilibrium, from its interplay with the emergence of a physical notion of time. Motivated by the \textit{group-covariant} symplectic approach of Souriau's Lie Group Thermodynamics (LGT) \cite{Souriau_b,Souriau,Marle}, we seek a generally covariant extension of the notion of equilibrium statistical states in a multi-symplectic formalism. 

In the LGT framework, the standard notion of time-translation Gibbs equilibrium state is generalized to the Hamiltonian Lie group action of the dynamical symmetry group of the system on its physical phase space. The Gibbs equilibrium state is then derived variationally via maximum entropy principle, by constraining the mean value of the equivariant momentum map associated to the group action. The momentum map plays the role of a generalized Hamiltonian function and the resulting Gibbs state is of equilibrium with respect to any one-parameter subgroup of the Lie group of symmetry of the system under consideration. 

In our approach, we seek a Lie-group generalization of the Gibbs state for generally covariant field theories, with a notion of  equilibrium to be characterized in terms of the Hamiltonian action of the spacetime diffeomorphism group on the extended phase space of the theory.

In particular, we work in the covariant multi-symplectic phase space formalism for generic parametrized first order field theories, which share essential general-covariant features with canonical General Relativity. Here, we identify the relevant covariant macroscopic observable of the generalised phase space with the \emph{covariant momentum map} \cite{gimmsyI, gimmsyII, gimmsyIII, 48,49} associated with the lifted action of the diffeomorphisms group on the \textit{extended phase space}. 

Here, building on the notion of covariant (multi-)momentum map and on the geometric characterization of the parametrization procedure introduced in \cite{lopezgotaymarsdenPFT,lopezgotayPFT}, where diffeomorphisms are promoted to dynamical fields (\textit{covariance fields}), we are able to define a spacetime covariant notion of equilibrium state for parametrized field theories via maximum entropy principle, where the condition of constant mean covariant momentum map is now interpreted as a generalization of the soft (average) diffeomorphism covariance constraint. This is the first result of our work. 

The Hamiltonian character of the diffeomorphism group action on phase space naturally forces us to move from a multi-symplectic to a canonical formalism. Indeed, the differential form character of the covariant momentum map associated to the \textsf{Diff}-action on the multi-phase space of the theory requires us to consider functional densities of fields smeared over co-dimension one slices for the entropy maximization to be well-defined.
 
For the (first order) parametrized field theories, the canonical initial value constraints coincide with the vanishing of the \emph{instantaneous} reduction of the covariant momentum map, the so called energy-momentum map \cite{gimmsyI, gimmsyII, gimmsyIII}, associated to the action of the gauge group of the theory on its extended phase space. Such a momentum map appears to encode all the dynamical information carried by the theory \cite{gimmsyI, gimmsyII, gimmsyIII}. 

In the canonical formalism, first class constraints generating the diffeomorphism symmetry are encoded in the normal and tangential components of the energy-momentum map so that the constraint surface is identified with its zero level set \cite{gimmsyII, gimmsyIII, Giulini, 47}. In the ADM formulation of gravity, for instance, the super-hamiltonian and super-momenta are the components of such energy-momentum map, reflecting the gauge symmetry (diffeomorphism covariance) of the full theory in the instantaneous setting \cite{50,ADM}.

The resulting generalized Gibbs state then defines a canonical distribution over the fields generalised phase space, in which a non-zero weight is assigned to field configurations which do not solve the diffeomorphism covariance constraint.

In absence of boundaries in the spacetime base manifold, the pre-symplectic nature of both the constraint surface and the space of solutions of the field equations prevents us from representing the diffeomorphism group via a Poisson algebra of functions on the on-shell field configurations, as the corresponding co-momentum Hamiltonian would be trivial \cite{torrecovphasespace}. This is what forces us to carry the entire analysis on the generalised (off-shell) phase space, differently from the LGT framework.

Moreover, the representation of the algebra of diffeomorphisms on the parametrized phase space of fields is achieved by taking into account the lifted action of spacetime diffeomorphisms on the embeddings naturally induced by that on the covariance fields. Unlike the standard canonical ADM framework, where lapse and shift are treated as pure phase space-independent Lagrange multipliers and the first-class constraints do not close a genuine Lie algebra, in the extended phase space approach, lapse and shift are functionals of the dynamical embedding fields and the diffeomorphisms algebra can be represented in terms of an equivariant momentum map on the parametrized phase space. Such a dependence is in fact the canonical counterpart of the \textsf{Diff}-action on the auxiliary covariance fields, which in turn ensures the instantaneous momentum map to be equivariant with respect to spacetime diffeomorphisms. The latter property is crucial for the resulting Gibbs state to be of equilibrium with respect to one-parameter subgroups of spacetime diffeomorphisms.



We show how spacetime-covariant Gibbs equilibrium is associated to a canonical spacetime foliation, where the covariant choice of a reference frame reflects in a Lie algebra-valued notion of \emph{local} temperature. Hence, we investigate how \emph{physical} equilibrium, hence time evolution, emerges from such a state and we focus on the role of the gauge symmetry in the thermodynamic description. This is the second main contribution of our work.


The paper is organized as follows. In Sec. \ref{LGT} we briefly recall the main idea of Souriau's Lie Group Thermodynamics and its generalized notion of Gibbs equilibrium state. In Sec. \ref{multisymplFT}, we move to field theory framework and present the main geometric setup on which our analysis is based. After recalling the multi-symplectic formulation for classical field theories (Sec. \ref{multisymplFT}) and the main steps of the parametrization procedure (Sec. \ref{PFTcovfields}), we elaborate this framework for first order parametrized field theories in Sec. \ref{PFTmultimomentmap}, where the notion of covariant multi-momentum map is discussed. In Sec. \ref{canonicalphasespaceEMmap} we move to the corresponding canonical formalism and provide the necessary preliminaries for the representation of spacetime diffeomorphisms via a covariant momentum map on the parametrized phase space of fields discussed in Sec. \ref{diffreprmmap}. With this setting at hand, the extension of Souriau's Lie group themodynamics to parametrized field theories -- that we call \textit{covariant gauge group thermodynamics} -- is derived in Sec. \ref{covariantgibbs}. The covariant notion of Gibbs equilibrium state as well as the corresponding generalized thermodynamic functions are constructed in Sec. \ref{covariantgibbs} and \ref{GTF}, respectively. Finally, in Sec. \ref{EquiandDynam}, we consider some explicit check for our formalism. In particular, in Sec. \ref{gaugefixedeqstate} we investigate how time evolution equilibrium emerges from a suitable gauge-fixing of diffeomorphism symmetry, while in Sec. \ref{thermalinterpret} we discuss analogies and differences with thermal time hypothesis and elaborate on the spacetime interpretation of the thermal flow associated to the covariant Gibbs state. We close with a discussion on the results and the related open perspectives in Sec. \ref{conclusions}.

\section{Souriau's Lie Group Thermodynamics}\label{LGT}

In classical mechanics, a covariant description of dynamics is achieved via symplectic reduction,  by working directly in the space of motions (or space of solutions), and its associated fully constrained phase space. The latter is, by construction, a connected symplectic manifold $(\mathcal M,\omega)$, coordinatised by physical \emph{observable} quantities which are invariant under the dynamical symmetry group of the system (e.g. \cite{torrecovphasespace}).  

On such a space, we shall consider the \emph{Hamiltonian action} $\Psi$ of the connected Lie group $G$, corresponding to the dynamical symmetry group of the system.\footnote{The action $\Psi:G\times\mathcal M\to\mathcal M$ of a Lie group $G$ on a symplectic manifold $(\mathcal M,\omega)$ is said to be symplectic if, for any $g\in G$, $\Psi_g:\mathcal M\to\mathcal M$ is a symplectomorphism, i.e. a smooth diffeomorphism of $\mathcal M$ such that $\Psi_g^*\omega=\omega$. The associated action $\psi:\mathfrak g\to\mathfrak X(\mathcal M)$ of the Lie algebra $\mathfrak g$ of $G$ on $\mathcal M$ is the one parameter group action induced by $\Psi$ via the fundamental vector fields associated to each Lie algebra element $\xi\in\mathfrak g$ satisfying
$$
\psi(\xi)(m)=\frac{\dd}{\dd\lambda}\left(\Psi_{\exp(\lambda\xi)}\cdot m\right)\Bigl|_{\lambda=0}
$$
where $\psi(\xi)\in\mathfrak X(\mathcal M)$ is a vector field on $\mathcal M$ and $m\in\mathcal M$. A symplectic G-action $\Psi$ is said to be Hamiltonian if the associated Lie algebra action $\psi$ is Hamiltonian, that is there exists a smooth real function $f$ on $\mathcal M$ such that $\text{i}_{\psi(\xi)}\omega=-\dd f$.} Let $\mathfrak{g}$ be the Lie algebra of $G$, $\mathfrak{g}^*$ be its dual space, and $J:\mathcal M\to \mathfrak{g}^*$ be a \emph{momentum map} of the $G$-action $\Psi$, that is for any $\xi\in\mathfrak g$ the function $J(\xi):\mathcal M\to\mathbb R$ by $J(\xi)(m)=\braket{J(m),\xi}$ is the Hamiltonian function associated to the vector field $\xi_\mathcal M=\psi(\xi)\in\mathfrak{X}(\mathcal M)$ generating the action $\psi$ of $\mathfrak g$ on $\mathcal M$. A \emph{statistical state} on $(\mathcal M,\omega)$ is a probability law $\mu$ on $\mathcal M$ defined by the product of the Liouville density of $\mathcal M$ 
with a classical distribution function \cite{Souriau_b}
\begin{equation}\label{stat}
\mu(\mathcal A)=\int_{\mathcal A} \rho(m) \,{\omega}^n(m)
\end{equation}
for each Borel subset $\mathcal A$ of $\mathcal M$, with $\rho:\mathcal M \to \mathbb{R}\,( [0, + \infty[)$ being a continuous density function, such that $\int_{\mathcal M} \rho(m)\,{\omega}^n(m)=1$. The Shannon entropy of the state $\mu$ is defined by
\begin{equation}\label{state}
S(\rho)=-\int_{\mathcal M} \rho(m)\,\log{(\rho(m))}\,{\omega}^n(m)\,,
\end{equation}
with $m\, \log(m) = 0$ if $m = 0$, and it measures the amount of uncertainty associated to the probability distribution $\rho(m)$. 
For a given constant mean value of $J$, thermodynamic equilibria $\mu_{eq}$ are states which maximise the entropy \cite{Jaynes, Souriau}, and such that $S(\rho_{eq})$ is stationary with respect to all infinitesimal smooth variations of the probability density.\\
The quantity 
\begin{equation}
Z(\xi) = \int_\mathcal M e^{-\langle J(m),\xi \rangle} \, \omega^n,
\end{equation}
defines a generalized partition function, associated to a Gibbs distribution
\begin{equation}
\rho(\xi) = \frac{1}{Z(\xi)}e^{-\langle J(m),\xi \rangle},
\end{equation}
which is invariant under the action of any one-parameter subgroup of $G$ on $\mathcal M$.
The Hamiltonian function  $\langle J, \xi \rangle: \mathcal M \to \mathbb{R}$, also called \emph{comomentum map}, provides a natural vector-valued generalisation of the Hamiltonian function, retaining the operational information of all conserved charges associated to the symplectic action of a dynamical symmetry group on a the system physical state space. 

As an example, in non-relativistic mechanics, the energy function $E$ on the symplectic manifold of motions $(\mathcal M,\omega)$ is nothing but the momentum map of the Hamiltonian action $\Psi^E$ on the manifold of the one-dimensional Lie group of time translations. As for the standard case of time translations, generalised free energy $F$  and internal energy $Q$ can be defined as smooth functions of the variable $\xi \in \mathfrak{g}$, taking value in $\mathbb{R}$ and in $\mathfrak{g}^*$, respectively. Souriau's group-theoretic formalism realises a covariant generalisation of the standard Maxwell-Boltzmann-Gibbs approach to thermodynamics, in which a (possibly multi-dimensional and non-Abelian) Lie group $G$ acts, by a Hamiltonian action $\Psi$, on the symplectic manifold of motions \cite{Souriau_b,Souriau,Marle}.
This leads, in particular, to a remarkable covariant generalization of Gibbs's equilibrium as soon as we consider the Hamiltonian action of a dynamical symmetry group of the system (e.g. Galilei, Poincar\'e) on its manifold of motions.

But what if the covariant symmetry group of the system is gauge? Building on the notion of covariant momentum map in the multi-symplectic field-theoretic formalism, in the following sections we realise a radical extension of Souriau's Lie group thermodynamics associated to the action of the spacetime diffeomorphism group on the extended phase space of a generic parametrised first order field theory. The ultimate result will be the definition of a generally covariant Gibbs equilibrium state, which encodes the intimate relation between gauge symmetry and dynamics, covariant and canonical formulation.

In the next section, we start with a brief summary of the formalism of multi-symplectic Hamiltonian mechanics for first order parametrised field theories. This formalism will be used throughout the paper to describe the multi-phase space for momentum map thermodynamics. We follow the seminal work\footnote{Further seminal work on the topic, with diverse approaches, includes \cite{Helein0,helein1,Sardanashvily,Ibort1,echeverria-enriquez}.} of \cite{gimmsyI, gimmsyII, gimmsyIII}, to which we refer for an extensive introduction. In particular, to get a fully spacetime-\emph{covariant} multisymplectic description we will extend the multi-phase space description along with the work of \cite{lopezgotaymarsdenPFT,lopezgotayPFT}. Starting from such an \emph{extended} multi-phase space, we will introduce the notion of covariant momentum map and eventually derive the notion of covariant Gibbs equilibrium.

\section{Multisymplectic Formulation of 1st-order Field Theories}\label{multisymplFT}

Let $\mathcal X$ be an oriented $(n+1)$-dimensional manifold (e.g. spacetime), and $\mathcal Y\xrightarrow{\pi_{\mathcal{XY}}}\mathcal X$ a finite-dimensional fiber bundle over $\mathcal X$ with fibers $\mathcal Y_x$ over $x\in\mathcal X$ of dimension $N$. The fiber bundle $\mathcal Y$ is the field theoretic analogue of the configuration space in classical mechanics, or \textbf{configuration bundle}. Physical fields correspond to sections $\phi$ of the bundle. 

A set of local coordinates $(x^\mu,y^A)$ on $\mathcal Y$ consists in $n$+1 local coordinates $x^\mu$, $\mu=0,\dots,n$ on $\mathcal X$, and $N$ fiber coordinates $y^A$, $A=1,\dots,N$, which give the components of the field at a given point $x\in\mathcal X$.
The notion of configuration bundle allows for a finite dimensional formulation of Hamiltonian mechanics for field theories, which relies on a \emph{operational} interpretation of the coordinates of $\mathcal Y$ (i.e., field values and spacetime positions) as partial observables of the theory \cite{Rovelli1,Rovelli2,RovelliPO}.\footnote{To determine an event in a $n$+1-dimensional spacetime, we need one clock and $n$ devices giving us the distance from $n$ reference objects. Accordingly, one needs $N$ devices to measure the components of the field at a given point $x\in\mathcal X$, and $n+1$ devices to determine $x$, thus resulting in a ($n$+$N$+1)-dimensional configuration space. A point in $\mathcal Y$ then represents a correlation between these observables, that is, a possible outcome of a simultaneous measurement of the partial observables.}

The Lagrangian density for a first order classical field theory is given by
\be
\mathscr L:J^1(\mathcal Y)\longrightarrow\Lambda^{n+1}(\mathcal X)\;,
\ee 
where $J^1(\mathcal Y)$ is the first jet bundle\footnote{A $k^{th}$-order Lagrangian density on the $k^{th}$ jet bundle $J^k(\mathcal Y)$ of $\mathcal Y$ for higher order theories \cite{saunders}.} of $\mathcal Y$, the field-theoretic analogue of the tangent bundle of classical mechanics\footnote{In this case $\mathcal Y=\mathbb R\times\mathcal Q$ is the extended configuration space regarded as an $\mathbb R$-bundle over $\mathcal Q$, and $J^1(\mathcal Q \times\mathbb R)$ is isomorphic to the bundle $T\mathcal Q \times T\mathbb R$.}, and $\Lambda^{n+1}(\mathcal X)$ is the space of the $(n+1)$-forms on $\mathcal X$. 

Local coordinates $(x^\mu,y^A)$ on $\mathcal Y$ induce coordinates $v^A_{\mu}$ on the fibers of $J^1(\mathcal Y)$, so that the first jet prolongation $j^1\phi$ of a section $\phi$ of the bundle $\mathcal Y\xrightarrow{\pi_{\mathcal{XY}}}\mathcal X$ gives 
\be
j^1\phi:x^\mu\longmapsto(x^\mu, y^A, v_\mu^A)=(x^\mu, y^A(x), y^A_{,\mu}(x))\;,
\ee
where $y^A_{,\mu}=\partial_\mu y^A$ and $\partial_{\mu}=\partial/\partial x^{\mu}$.

The Lagrangian then reads
\be
\mathscr L(j^1\phi)=L\left(x^{\mu},y^A(x), y^A_{,\mu}(x) \right)\dd^{n+1}x\;, 
\ee
where $\dd^{n+1}x=\dd x^0 \wedge\dots \wedge \dd x^n$ is the volume form on $\mathcal X$.

The introduction of \textit{multimomenta} $p_A^\mu$ and \textit{covariant Hamiltonian} $p$, via the Legendre transform
\be
p_A^\mu=\frac{\partial L}{\partial v^A_\mu}\qquad,\qquad p=L-\frac{\partial L}{\partial v^A_\mu}v^A_\mu\;,
\ee
leads to the field-theoretic analogue of the phase space of classical mechanics, which is provided by the so-called \textbf{multiphase space} $\mathcal Z$. The latter is defined as the sub-bundle of 2-horizontal $(n+1)$-forms on $\mathcal Y$ whose elements can be uniquely written in terms of the fiber coordinates $(p, p^A_{\mu})$ as
\be \label{b3}
z = p \,\dd^{n+1}x + p_A^{\mu} \dd y^A \wedge \dd^n x_{\mu}\;,
\ee
with $\dd^n x_{\mu} = \text{i}_{\partial_{\mu}}\dd^{n+1}x$, such that the contraction $\text{i}_V\text{i}_Wz$ with any two vertical vector fields $V=V^A\partial/\partial y^A$, $W=W^A\partial/\partial y^A$ on $\mathcal Y$ vanishes. As proved in \cite{gimmsyI}, the space $\mathcal Z$ is canonically isomorphic to the dual jet bundle $J^1(\mathcal Y)^*$, the latter playing the role of the cotangent bundle.

In complete analogy to standard symplectic mechanics \cite{ham,marsdenratiu}, the \textbf{canonical Poincar\'e-Cartan $(n+1)$-form} $\Theta$ on $\mathcal Z$ is given by
\be
\Theta = p\, \dd^{n+1}x+p_A^{\mu} \dd y^A \wedge \dd^n x_{\mu}\;, 
\ee
and the \textbf{canonical $(n+2)$-form} $\Omega$ on $\mathcal Z$ is then defined by\footnote{Note that for each field component $y^A$ there are multiple momenta $p_A^{\mu}$, one per each spatiotemporal direction ($\mu=0,\dots n$).}
\be
\Omega= -\dd \Theta=\dd y^A \wedge \dd p_A^{\mu} \wedge \dd^n x_{\mu} -\dd p \wedge \dd^{n+1}x\;. 
\ee
Then the pair $(\mathcal Z,\Omega)$ is an example of \textbf{multisymplectic manifold}\footnote{A multisymplectic manifold $(\mathcal M,\Omega)$ is a manifold endowed with a closed non-degenerate $k$-form $\Omega$ ($k=n+2$ in our case), i.e., such that $\dd\Omega=0$ and $\text{i}_V \Omega\neq0$ for any nonzero tangent vector $V$ on $\mathcal M$ (see e.g. \cite{gimmsyI, helein1,helein2}).}. 

From $(\mathcal Z,\Omega)$, the usual definitions of classical mechanics on the extended phase space are recovered when $\mathcal X$ is one-dimensional (i.e., $n=0$) as reported into the following table which summarizes the analogies between classical symplectic mechanics and the multisymplectic formulation of classical field theories: 

\begin{center}
\begin{tabular}{c|c}
\textbf{Classical Mechanics} ($n=0, \mathcal X\equiv\mathbb R$) & \textbf{Field Theory} ($n>0, \text{dim}\,\mathcal X=n+1$)\\
\hline
&\\
extended configuration space & configuration bundle over spacetime\\
$\mathcal Y=\mathbb R\times\mathcal Q$ & $\mathcal Y\xrightarrow{\pi_{\mathcal{XY}}}\mathcal X$\\
&\\
local coordinates on $\mathcal Y$ & local coordinates on $\mathcal Y$\\
$(t,q^A)$ & $(x^\mu, y^A)$\\
&\\
extended phase space & multiphase space\\
$\mathcal P=T^*\mathcal Y=T^*\mathbb R\times T^*\mathcal Q$ & $J^1(\mathcal Y)^*\cong\mathcal Z\subset\Lambda^{n+1}(\mathcal Y)$\\
&\\
local coordinates on $\mathcal P$ & local coordinates on $\mathcal Z$\\
$(t,q^A, E, p_A)$ & $(x^\mu, y^A, p, p^\mu_A)$\\
&\\
Poincar\'e-Cartan 1-form on $\mathcal P$ & Poincar\'e-Cartan ($n$+1)-form on $\mathcal Z$\\
$\Theta=p_A\dd q^A+E\dd t$ &$\Theta=p\dd^{n+1}x+p_A^\mu\dd y^A\wedge\dd^nx_\mu$\\
&\\
symplectic 2-form on $\mathcal P$ & multisymplectic ($n$+2)-form on $\mathcal Z$\\
$\Omega=\dd q^A\wedge\dd p_A-\dd E\wedge\dd t$ & $\Omega=\dd y^A\wedge\dd p_A^\mu\wedge\dd^nx_\mu-\dd p\wedge\dd^{n+1}x$\\
\end{tabular} 
\end{center}

\subsection{Covariant Multi-Phase Space Description}\label{PFTcovfields}

As it is well known from the pioneering work of Dirac \cite{Diracalgebra1}, further developed by Kucha\v{r} and Isham \cite{ishamkuchar1, kucharcanonquantgrav,kucharcanonquantgencovsystems}, field theories with a fixed background metric can be made generally covariant, i.e., with the spacetime diffeomorphism group as symmetry group, via a \textbf{parametrization} procedure. This amounts to introduce the diffeomorphisms as new dynamical fields as to extend the covariance group of the theory, while leaving the solution space unchanged. A precise geometric reformulation of the parametrization procedure within the context of multi-symplectic field theories
was developed by Castrill\'on L\'opez, Gotay and Marsden in \cite{lopezgotaymarsdenPFT,lopezgotayPFT}. 

The main step of the construction consists in the introduction of the \textbf{covariance fields}, namely (oriented) diffeomorphisms of $\mathcal X$, reinterpreted as sections $\eta:\mathcal X\to\tilde{\mathcal X}$ of the bundle $\tilde{\mathcal X}\times\mathcal X\xrightarrow{\tilde\pi}\mathcal X$, where $\tilde{\mathcal X}$ is a copy the (spacetime) base manifold equipped with a given metric $g$.

Considering $\eta$ as new dynamical fields extends the configuration bundle $\mathcal Y$ as the fibered product $\tilde{\mathcal Y}=\mathcal Y\times_{\mathcal X}(\tilde{\mathcal X}\times\mathcal X)$, whose sections are thought of as pairs $(\phi,\eta)$ according to the diagram
$$
\xymatrix{
\qquad\;\;\,\mathcal Y\times_{\mathcal X}(\tilde{\mathcal X}\times\mathcal X)\ar[dd]^-{\pi_{\mathcal{X}\tilde{\mathcal Y}}}\\
\\
\ar@{-->}@/^3.15pc/[uu]^{\phi} \mathcal X \ar@{-->}@/_3.15pc/[uu]_{\eta}
}
\qquad\qquad \xymatrix{\\ \text{i.e.}\\}\qquad\qquad
\xymatrix{
&\ar[dl]_-{\pi_\ell}\mathcal Y\times_{\mathcal X}(\tilde{\mathcal X}\times\mathcal X)\ar[dd]^-{\pi_{\mathcal{X}\tilde{\mathcal Y}}}\ar[dr]^-{\pi_r}&\\
\mathcal{Y}\ar@/_0.35pc/[dr]_-{\pi_{\mathcal{XY}}} & & *+[r]{\tilde{\mathcal X}\times\mathcal X}\ar@/^0.35pc/[dl]^-{\tilde\pi}\\
&\ar@{-->}@/_0.35pc/[ul]_-{\phi}\mathcal X\ar@{-->}@/^0.35pc/[ur]^-{\eta}&
}
$$
Here $u^a$ ($a=0,\dots,n$) denote the coordinates on $\tilde{\mathcal X}$, while $\pi_\ell:\tilde{\mathcal Y}\to\mathcal Y$ and $\pi_r:\tilde{\mathcal Y}\to\tilde{\mathcal X}\times\mathcal X$ are the projections on the first and second factor of $\tilde{\mathcal Y}$ respectively acting as $\pi_\ell:(x^\mu, y^A, u^a)\mapsto(x^\mu, y^A)$ and $\pi_r:(x^\mu, y^A, u^a)\mapsto(x^\mu, u^\alpha)$, and $\pi_{\mathcal{XY}}(y)=\tilde\pi(u)$.

Finally, one replaces the Lagrangian density $\mathscr L$ of the theory by a new Lagrangian density
\be
\tilde{\mathscr L}:J^1(\tilde{\mathcal Y})\longrightarrow\Lambda^{n+1}(\mathcal X)\;,
\ee
such that
\be\label{tildeL}
\tilde{\mathscr L}(j^1\phi,j^1\eta):=\mathscr L(j^1\phi,\eta^*g)\;.
\ee 
By denoting coordinates on $J^1(\tilde{\mathcal Y})$ by $(x^\mu,y^A,v_\mu^A,u^a, u^a_\mu)$ with $u_\mu^a$ the jet coordinates associated to $u^a$, the new Lagrangian reads 
\be\label{tildeL2}
\tilde{\mathscr L}(x^\mu,y^A,v_\mu^A,u^a, u^a_\mu)=\mathscr L(x^\mu,y^A,v_\mu^A\,; G_{\mu\nu})\;,
\ee
where
\be
G_{\mu\nu}\equiv(\eta^*g)_{\mu\nu}=\eta^a_{,\mu}\,\eta^b_{,\nu}\,g_{ab}	\comp\eta=u^a_{\mu}\,u^b_{\nu}\,g_{ab}\comp\eta\;.
\ee

Now, consider a diffeomorphism of the base manifold, $\alpha_{\mathcal X}\in\textsf{Diff}(\mathcal X)$. We denote by $\alpha_{\mathcal Y}\in\textsf{Aut}(\mathcal Y)$ its lift to $\mathcal Y$. This action is extended to an action on $\tilde{\mathcal Y}$ via bundle automorphisms by requiring that $\textsf{Diff}(\mathcal X)$ acts trivially on $\tilde{\mathcal X}$, i.e.
\be\label{trivext}
\alpha_{\tilde{\mathcal X}}:\,\tilde{\mathcal X}\times\mathcal X\longrightarrow\tilde{\mathcal X}\times\mathcal X\qquad\text{by}\qquad(u,x)\longmapsto(u,\alpha_{\mathcal X}(x))\;.
\ee
The induced action on the space $\tilde{\mathscr Y}\equiv\Gamma(\mathcal X,\tilde{\mathcal Y})$ of sections of $\tilde{\mathcal Y}$ is then given by
\be\label{actiononsections1}
\alpha_{\tilde{\mathscr Y}}(\phi,\eta)=(\alpha_{\mathscr Y}(\phi),\alpha_{\tilde{\mathscr X}}(\eta))\;,
\ee
where
\be\label{actiononsections2}
\alpha_{\mathscr Y}(\phi)=\alpha_\mathcal Y\comp\phi\comp\alpha^{-1}_{\mathcal X} \ ,
\ee
for $\phi\in\mathscr Y\equiv\Gamma(\mathcal X,\mathcal Y)$, generalizes the usual push-forward action on tensor fields, and
\be\label{actiononsections3}
\alpha_{\tilde{\mathscr X}}(\eta)=\eta\comp\alpha^{-1}_{\mathcal X}\;,
\ee
is the (left) action by composition on sections of the trivial bundle $\tilde{\mathcal X}\times\mathcal X$. The modified field theory on $J^1(\tilde{\mathcal Y})$ with Lagrangian density given in \eqref{tildeL} is \textbf{$\textsf{Diff}(\mathcal X)$-covariant}. Indeed, the Lagrangian density \eqref{tildeL} is \textbf{$\textsf{Diff}(\mathcal X)$-equivariant}, i.e. \cite{lopezgotaymarsdenPFT,lopezgotayPFT}
\be\label{Gequiv}
\tilde{\mathscr L}\left(j^1(\alpha_{\mathscr Y}(\phi)),j^1(\alpha_{\tilde{\mathscr X}}(\eta))\right)=(\alpha_{\mathcal X}^{-1})^*\left[\tilde{\mathscr L}(j^1\phi,j^1\eta)\right]\,.
\ee

To sum up, the key point of the construction is that now the fixed background metric $g$ is no longer thought of as living on $\mathcal X$, but rather as a geometric object on the copy $\tilde{\mathcal X}$ in the fiber of the extended configuration bundle $\tilde{\mathcal Y}$. On the other hand, the pulled back metric variable $G=\eta^*g$ on $\mathcal X$ inherits a dynamical character from the covariance field $\eta$. The true dynamical fields of the parametrized theory are then given by $\phi$ and $\eta$. As discussed in \cite{lopezgotaymarsdenPFT,lopezgotayPFT}, the Euler-Lagrange equations for the fields $\phi$ remain unchanged while those for the covariance fields $\eta$ give the conservation of the stress-energy-momentum tensor. 
Such a construction provides us with a diffeomorphism-covariant formulation of field theories, where all fields are treated as variational entities  \cite{Anderson}\footnote{This has to be contrasted with considering the metric $g$ directly as a genuine field on $\mathcal X$, which will then make the Lagrangian $\textsf{Diff}(\mathcal X)$-equivariant but will prevent $g$ itself from being variational, unless one adds a source term (e.g., the Einstein-Hilbert Lagrangian) to the Lagrangian density.}.  

\subsection{Covariant Momentum Map}\label{PFTmultimomentmap}

Along the lines discussed in Sec. \ref{multisymplFT}, we shall now construct a covariant Hamiltonian formalism for (first order) parametrized field theories by considering the {covariant}, or parametrized, multiphase space $\tilde{\mathcal Z}\cong J^1(\tilde{\mathcal Y})^*$ equipped with a canonical Poincar\'e-Cartan $(n+1)$-form
\be
\tilde\Theta=\tilde p\,\dd^{n+1}x+p_A^\mu\dd y^A\wedge\dd^nx_\mu+\varrho_a^\mu\dd u^a\wedge\dd^nx_\mu\;.
\ee
In particular, as both $\phi$ and $\eta$ are dynamical variables for the extended theory, the covariant Hamiltonian $\tilde p$ and the multimomenta $p_A^\mu, \varrho_a^\mu$ (respectively conjugate to the multivelocities $v^A_\mu$ and $u^a_\mu$) are defined w.r.t. the Lagrangian \eqref{tildeL2} as follows
\be\label{PFTmultimomenta}
\tilde p=\tilde L-\frac{\partial\tilde L}{\partial v^A_\mu}v^A_\mu-\frac{\partial\tilde L}{\partial u^a_\mu}u^a_\mu,\qquad p_A^\mu=\frac{\partial\tilde L}{\partial v^A_\mu}=\frac{\partial L}{\partial v^A_\mu},\qquad\varrho_a^\mu=\frac{\partial\tilde L}{\partial u^a_\mu}=\mathcal T^{\mu\nu}u^b_\nu g_{ab}\;,
\ee
with $\mathcal T^{\mu\nu}=2\frac{\partial L}{\partial G_{\mu\nu}}$ the so-called \textit{Piola-Kirchhoff stress-energy-momentum tensor density} \cite{lopezgotaymarsdenPFT}.

\noindent
The multisymplectic $(n+2)$-form $\tilde\Omega=-\dd\tilde\Theta$ on $\tilde{\mathcal Z}$ then reads
\be
\tilde\Omega=\dd y^A \wedge \dd p_A^{\mu} \wedge \dd^n x_{\mu}+\dd u^a \wedge \dd \varrho_a^{\mu} \wedge \dd^n x_{\mu}-\dd\tilde p \wedge \dd^{n+1}x\;.
\ee

Let now $\mathcal G$ be a Lie group (perhaps infinite-dimensional) realizing the gauge group of the theory and denote by $\gi$ its Lie algebra. In the case of generally covariant field theories, $\mathcal G$ is a subgroup of $\textsf{Aut}(\tilde{\mathcal Y})$ covering diffeomorphisms on $\mathcal X$. Given an element $\xi \in \gi$, we denote by $\xi_\mathcal X,\xi_{\mathcal Y},\xi_{\tilde{\mathcal Y}}$, and $\xi_{\tilde{\mathcal Z}}$ the infinitesimal generators of the corresponding transformations on $\mathcal{X},\mathcal Y,\tilde{\mathcal Y}$, and $\tilde{\mathcal Z}$, i.e., the infinitesimal generators on $\mathcal{X},\mathcal Y,\tilde{\mathcal{Y}}$, and $\tilde{\mathcal Z}$ of the one-parameter group generated by $\xi$. The group $\mathcal{G}$ is said to act on $\tilde{\mathcal Z}$ by \textbf{covariant canonical transformation} if the $\mathcal G$-action corresponds to an infinitesimal multi-symplectomorphism, i.e.
\be
\mathcal L_{\xi_{\tilde{\mathcal Z}}}\tilde\Omega = 0\;, 
\ee
where $\mathcal L_{\xi_{\tilde{\mathcal Z}}}$ denotes the Lie derivative along $\xi_{\tilde{\mathcal Z}}$, while it is said to act by \textbf{special covariant canonical transformations} if $\tilde\Theta$ is $\mathcal G$-invariant, that is
\be
\mathcal L_{\xi_{\tilde{\mathcal Z}}}\tilde\Theta = 0\;. 
\ee 
This is the Hamiltonian counterpart of the $\mathcal G$-equivariance property \eqref{Gequiv} of the Lagrangian which in turn amounts to $\mathcal G$-invariance of the Lagrangian form $\Theta_{\tilde{\mathscr L}}$ on $J^1(\tilde{\mathcal Y})$ defined as the pull-back of $\tilde\Theta$ along the covariant Legendre transform.

On the extended covariant phase space, we now seek a general covariant macroscopic \emph{observable}, namely an object which is preserved by (stationary w.r.t.) the action of the spacetime diffeomorphisms and can be used to eventually characterise the fields configurations statistically on a mascoscopic, average scale. In general, group-theoretic terms, the object we need is the equivariant momentum map associated to the group action, which provides a generalization of the Hamiltonian function, generally comprising all conserved charges associated to the symplectic action of some dynamical group on a given phase space \cite{Souriau_b}. In this sense, the momentum map can be naturally used to generalize the Maxwell-Boltzmann-Gibbs approach to thermodynamics for the case of a vector-valued energy function, not just limited to time translation symmetry. As discussed in Sec. \ref{LGT}, Souriau's Lie Group Thermodynamics \cite{Souriau} first provides a remarkable \emph{group-covariant} generalization of Gibbs's equilibrium, with stationarity defined with respect to the action of the dynamical symmetry group of the system (e.g. Galilei, Poincar\'e) on its \emph{fully constrained} phase space (on-shell) \cite{Marle}. 
Similarly, in what follows, we will define a \emph{general-covariant} generalization of Gibbs's equilibrium as a radical conceptual extension of Souriau's Lie Group Thermodynamics for generally covariant field theories, where the covariant symmetry group of the system is \emph{gauge} and the symplectic phase space consists of the full \emph{extended phase space} of fields.

To this aim, let us notice that in analogy to the definition of momentum maps in symplectic geometry \cite{ham,marsdenratiu,Gui}, a {\bf covariant momentum map} (or a {\bf multimomentum map}) associated to the action of $\mathcal G$ on $\tilde{\mathcal Z}$ by covariant canonical transformations is a map
\be
\tilde{\mathcal J}: \tilde{\mathcal Z} \longrightarrow \gi^*\otimes \Lambda^n(\tilde{\mathcal Z})\;,
\ee
given by
\be\label{covmomentummap}
\dd\tilde{\mathcal{J}}(\xi) = \text{i}_{\xi_{\tilde{\mathcal Z}}}\tilde\Omega, 
\ee
where $\tilde{\mathcal J}(\xi)$ is the $n$-form on $\tilde{\mathcal Z}$ whose value at $\tilde{z} \in\tilde{\mathcal Z}$ is $\braket{\tilde{\mathcal J}(\tilde z),\xi}$ with $\braket{\cdot,\cdot}$ being the pairing between the Lie algebra $\mathfrak g$ and its dual $\mathfrak g^*$. 

Let then $\alpha\in\mathcal G$ be the transformation associated to $\xi\in\mathfrak g$, the covariant momentum map $\tilde{\mathcal J}$ is said to be $\text{Ad}^*$-\textbf{equivariant} if
\be
\tilde{\mathcal J}(\text{Ad}^{-1}_\alpha \xi)  = \alpha_{\tilde{\mathcal Z}}^*[\tilde{\mathcal J}(\xi)]\;.
\ee
When $\mathcal G$ acts by special covariant canonical transformations, the (special) covariant momentum map admits an explicit expression given by
\be\label{specialmmap}
\tilde{\mathcal J}(\xi) = \text{i}_{\xi_{\tilde{\mathcal Z}}}\tilde\Theta\;,
\ee
so that $\dd\tilde{\mathcal J}(\xi)= \dd\text{i}_{\xi_{\tilde{\mathcal Z}}}\tilde\Theta=(\mathcal L_{\xi_{\tilde{\mathcal Z}}} -\text{i}_{\xi_{\tilde{\mathcal Z}}} \dd)\tilde\Theta=\text{i}_{\xi_{\tilde{\mathcal Z}}}\tilde\Omega$. In particular -- and this is the case of interest for parametrized field theories -- if the $\mathcal G$-action on $\tilde{\mathcal Z}$ is the lift of an action of $\mathcal G$ on $\tilde{\mathcal Y}$, then for any $\xi\in\gi$ realized as a (complete) vector field $\xi_\mathcal X=\xi^\mu(x)\frac{\partial}{\partial x^\mu}$ on $\mathcal X$, we have
\be\label{xiYtilde}
\xi_{\tilde{\mathcal Y}}=\xi^\mu(x)\frac{\partial}{\partial x^\mu}+\xi^A(x,y,u,[\xi])\frac{\partial}{\partial y^A}\;,
\ee 
where in general $\xi^A(x,y,u,[\xi])$ is a smooth functional of $\xi^\nu$ which for tensor field theories reads
\be
\xi^A(x,y,u,[\xi])=C^{A\rho}_\nu(x,y,u)\xi^\nu_{,\rho}(x)+C^A_\nu(x,y,u)\xi^\nu(x)\;,
\ee
with coefficients $C$ depending on $x^\mu$, $y^B$ and $u^b$, and reduces to $\xi^A(x,y,u,[\xi])=C^A_\nu(x,y,u)\xi^\nu(x)$ in the case of a scalar field. Note that in Eq. \eqref{xiYtilde} there are no components in the ``$u$-directions'' since, as discussed in the previous section (cfr. Eq. \eqref{trivext}), for any $\alpha_\mathcal X\in\textsf{Diff}(\mathcal X)$, there is a lift $\alpha_\mathcal Y\in\textsf{Aut}(\mathcal Y)$ which is trivially extended to $\tilde{\mathcal Y}$ by $\alpha_{\tilde{\mathcal Y}}:(y,u,x)\mapsto(\alpha_\mathcal Y(y), u,\alpha_\mathcal X(x))$.
In coordinates, Eq. \eqref{specialmmap} then reads
\be\label{specialmmap2}
\braket{\tilde{\mathcal J}(\tilde z),\xi}=\left(\tilde p\, \xi^{\mu}+p_A^{\mu} \xi^A\right)\dd^n x_{\mu} -p_A^{\mu} \xi^{\nu} \dd y^A  \wedge \dd^{n-1} x_{\mu\nu}-\varrho_a^{\mu} \xi^{\nu} \dd u^a  \wedge \dd^{n-1} x_{\mu\nu}, 
 \ee
where $\dd^{n-1} x_{\mu \nu} = \text{i}_{\partial_{\nu}}\text{i}_{\partial_{\mu}} \dd^{n+1}x$.

\section{Covariant Momentum Map as a Statistical Observable}\label{canonicalphasespaceEMmap}

The covariant momentum map associated to the lifted action to $\tilde{\mathcal Z}$ of the automorphisms of $\tilde{\mathcal Y}$ covering diffeomorphisms on the base manifold $\mathcal X$ provides us with a general-covariant extension of the notion of ``Hamiltonian''. However, formally, we are still one step behind our goal to extend the generalized notion of thermodynamic equilibrium states \'a la Souriau to covariant field theories. Indeed, a variational Maximum Entropy derivation of the generalized Gibbs states requires to constrain the mean value of the momentum map. However, the momentum map $\tilde{\mathcal J}:\tilde{\mathcal Z}\to\mathfrak g^*\otimes\Lambda^{n}(\tilde{\mathcal Z})$ has a differential form component and the definition of its mean value is simply not defined. In facts, as an $n$-form on $\tilde{\mathcal Z}$, $\tilde{\mathcal J}$ should be integrated over co-dimension 1 hypersurfaces in a $(n+1)$-dimensional submanifold of $\tilde{\mathcal Z}$ in order to produce a proper functional of the fields. How shall we chose such hypersurfaces? 

What we need is to introduce a \emph{foliation} of spacetime and consequently of the bundles over it. Following the approach of the canonical formulation of field theory, we shall decompose the base manifold $\mathcal X$ into a smooth disjoint union of space-like hypersurfaces $\Sigma$\footnote{In what follows, we assume the spacetime manifold to be globally hyperbolic, i.e., $\mathcal X\cong\Sigma\times\mathbb R$, so that the foliation introduced to construct the canonical formalism covers the whole manifold thus avoiding technicalities concerning the possibility of defining the canonical formalism only locally.}, make such slicing covariant and compatible with the Hamiltonian action of the diffeomorphisms and eventually integrate our momentum map over sections restricted on the slices.\footnote{An alternative approach to get an observable momentum map functional is to use a covariant phase space approach \`a la Helein (see e.g. \cite{helein1,helein3,helein4} and references therein for details as well as its relation with the multi-symplectic formulation). This consists in introducing a slicing directly at the level of the fields target space rather than considering the pulled-back quantities along sections on co-dimension one hypersurfaces in the base manifold.}

Let then $\Sigma$ be a compact, oriented, connected, boundaryless 3-manifold and let $\textsf{Emb}_G(\Sigma, \mathcal X)$ be the set of all space-like embeddings of $\Sigma$ in $\mathcal X$. A foliation $\mathfrak{s}_\mathcal X$ of $\mathcal X$ then corresponds to a 1-parameter family $\lambda\mapsto\tau_\lambda$ of space-like embeddings $\tau_\lambda\in\textsf{Emb}_G(\Sigma, \mathcal X)$ of $\Sigma$ in $\mathcal X$, i.e.
\be\label{fol}
\mathfrak{s}_\mathcal X:\Sigma\times\mathbb R\to\mathcal X\quad\text{by}\quad (\vec x,\lambda)\mapsto \mathfrak{s}_\mathcal X(\vec x,\lambda)\,,
\ee
such that
\be
\tau\equiv\tau_\lambda: \Sigma\to\mathcal X\quad\text{by}\quad \tau(\vec x)\equiv\tau_\lambda(\vec x):=\mathfrak{s}_\mathcal X(\vec x,\lambda)\;,
\ee
where $\vec x$ is a shorthand notation for the spatial coordinates $x^i$, $i=1,\dots,n$, on the space-like hypersurface $\Sigma_\tau=\tau(\Sigma)$. The generator of $\mathfrak s_\mathcal X$ is a complete vector field $\zeta_\mathcal X$ on $\mathcal X$ everywhere transverse to the slices defined by
\be
\dot\tau(\vec x)=\frac{\partial}{\partial\lambda}\mathfrak{s}_\mathcal X(\vec x,\lambda)=\zeta_\mathcal X\left(\mathfrak{s}_\mathcal X(\vec x,\lambda)\right)\,.
\ee
A foliation of $\mathcal X$ induces a \textit{compatible slicing} of bundles over it whose generating vector fields project onto $\zeta_\mathcal X$. The flow of such a generating vector field defines a one-parameter group of bundle automorphisms. For parametrized field theories we are interested in a so-called \textbf{$\mathcal G$-slicing} in which case the one-parameter group of automorphisms of the extended configuration bundle is induced by a one-parameter subgroup of the gauge group $\mathcal G$, i.e., $\zeta_{\tilde{\mathcal Y}}=\xi_{\tilde{\mathcal Y}}$ for some $\xi\in\mathfrak{g}$. The corresponding slicing $\mathfrak{s}_{\tilde{\mathcal Z}}$ of $\tilde{\mathcal Z}$ is then generated by the canonical lift $\zeta_{\tilde{\mathcal Z}}=\xi_{\tilde{\mathcal Z}}$ of $\xi_{\tilde{\mathcal Y}}$ to $\tilde{\mathcal Z}$ whose flow defines a one-parameter group of bundle automorphisms by special canonical transformations on $\tilde{\mathcal Z}$ (i.e., $\mathcal L_{\xi_{\tilde{\mathcal Z}}}\tilde\Theta=0$)\footnote{As already stressed before, this essentially reflects the equivariance property of the Lagrangian density w.r.t. to the one-parameter groups of automorphisms associated to the induced slicings of $J^1(\tilde{\mathcal Y})$ and $\Lambda^{n+1}(\mathcal X)$.}.

Spatial fields will then be identified with smooth sections of the pull-back bundle $\mathcal Y_\tau\to\Sigma_\tau$ over a Cauchy surface given by $\varphi:=\phi_\tau=\tau^*\phi$. Note that, as the subscript $\tau$ is meant to recall, the spatial fields $\varphi(\vec x)=\phi_\tau(\vec x)$ are functionals of the embedding $\tau$ and at the same time functions of the point $\vec x$ on the spatial slice. Moreover, according to the parametrization procedure discussed in Sec. \ref{PFTcovfields}, the space-like embedding $\tau\in\textsf{Emb}_G(\Sigma, \mathcal X)$ acquires a dynamical character through the covariance fields $\eta$. Indeed, we have $\tau=\eta^{-1}\comp\tilde\tau$ for a given space-like embedding $\tilde\tau\in\textsf{Emb}_g(\Sigma, \tilde{\mathcal X})$ of $\Sigma$ into $\tilde{\mathcal X}$ associated to the slicing of $\tilde{\mathcal X}$ w.r.t. the fixed metric $g$. The canonical parametrized configuration space then consists of the pairs $(\varphi, \tau)$ of spatial fields defined over a Cauchy slice and the space-like embeddings identifying a $\mathcal G$-slicing of spacetime w.r.t. one-parameter subgroups of diffeomorphisms. Let then $(x^0,x^1,\dots,x^n)$ be a chart on $\mathcal X$ adapted to $\tau$, i.e. such that $\Sigma_\tau$ is locally a level set of $x^0$. Denoting by $(\varphi,\Pi,\tau,P)$ a point in the canonical parametrized phase space $T^*\tilde{\mathscr{Y}_\tau}=T^*\mathscr{Y}_\tau\times T^*\textsf{Emb}_G(\Sigma, \mathcal X)$, the canonical symplectic structure $\tilde{\omega}_\tau$ on $T^*\tilde{\mathscr{Y}_\tau}$ reads as \cite{ishamsymplgeom}
\be\label{fieldsymplecticstructure}
\tilde{\omega}_\tau(\varphi,\Pi,\tau,P)=\int_{\Sigma_\tau}\left(\dd\varphi^A\wedge\dd\Pi_A+\dd\tau^\mu\wedge\dd P_\mu\right)\otimes\dd^n x_0\;.
\ee

The canonical and multi-symplectic parametrized phase space descriptions are related as follows. Following the construction of \cite{gimmsyII} (cfr. Ch. 5), the multisymplectic structure on $\tilde{\mathcal Z}$ induces a presymplectic structure on the space $\tilde{\mathscr Z}_\tau$ of sections of the bundle $\tilde{\mathcal Z}_\tau\to\Sigma_\tau$ given by
\be\label{PFTpresympl}
\tilde\Omega_\tau(\sigma)(V,W)=\int_{\Sigma_\tau}\sigma^*(\text{i}_W\text{i}_V\tilde\Omega)\qquad,\qquad\sigma\in\tilde{\mathscr Z}_\tau\;,\; V,W\in T_\sigma\tilde{\mathscr{Z}}_\tau
\ee
which in turn is related to $\tilde{\omega}_\tau$ via $\tilde\Omega_\tau=R_\tau^*\tilde\omega_\tau$, where $R_\tau$ is the bundle map $R_\tau:\tilde{\mathscr{Z}_\tau}\to T^*\tilde{\mathscr{Y}_\tau}$ relating in adapted coordinates the momenta $\Pi_A$ and $P_a$ respectively to the temporal components of the multimomenta $p_A^\mu$ and $\varrho_a^\mu$ as
\be\label{canonmomenta}
\Pi_A=p_A^0\comp\sigma\qquad,\qquad P_a=\varrho_a^0\comp\sigma\;.
\ee
In particular \cite{gimmsyII}, $\ker T_\sigma R_\tau=\ker\tilde{\Omega}_\tau(\sigma)$ and the canonical parametrized phase space $T^*\tilde{\mathscr{Y}_\tau}$ is thus isomorphic to the quotient $\tilde{\mathscr{Z}}_\tau/\ker\tilde{\Omega}_\tau$.


Let now $\sigma\in\tilde{\mathscr Z}\equiv\Gamma(\mathcal X,\tilde{\mathcal Z})$ be a section of the bundle $\tilde{\mathcal Z}$ over $\mathcal X$, and let $\alpha_{\tilde{\mathcal Z}}:\tilde{\mathcal Z}\to\tilde{\mathcal Z}$ be a covariant canonical transformation covering a diffeomorphism $\alpha_\mathcal X:\mathcal X\to\mathcal X$ whose induced action on sections is given by $\alpha_{\tilde{\mathscr Z}}(\sigma)=\alpha_{\tilde{\mathcal Z}}\comp\sigma\comp\alpha_\mathcal X^{-1}$ (cfr. Eq. \eqref{actiononsections2}). The corresponding transformation on $\tilde{\mathscr{Z}}_\tau\equiv\Gamma(\Sigma_\tau,\tilde{\mathcal Z})$ given by
\begin{align}
\alpha_{\tilde{\mathscr Z}_\tau}:\tilde{\mathscr Z}_{\eta^{-1}\comp\tilde\tau}&\longrightarrow\tilde{\mathscr Z}_{\alpha_{\tilde{\mathscr{X}}}(\eta)^{-1}\comp\tilde\tau}\nonumber\\
\sigma&\longmapsto\alpha_{\tilde{\mathscr Z}_\tau}(\sigma)=\alpha_{\tilde{\mathcal Z}}\comp\sigma\comp\alpha_\tau^{-1}\;,\label{cansecaction}
\end{align}
with $\alpha_{\tilde{\mathscr{X}}}(\eta)$ defined in \eqref{actiononsections3} and $\alpha_\tau:=\alpha_\mathcal X|_{\Sigma_\tau}$ is a (special) covariant canonical transformation relative to the presymplectic 2-form \eqref{PFTpresympl} if $\alpha_{\tilde{\mathcal Z}}$ is a (special) covariant canonical transformation \cite{gimmsyII}, and we have the following:

\begin{theorem}
The covariant multimomentum map \eqref{covmomentummap} associated to the $\mathcal G$-action on $\mathcal{\tilde{Z}}$ induces a (\textbf{parametrized}) \textbf{energy-momentum map}, $\mathcal{\tilde{E}}_{\tau}: \mathscr{\tilde{Z}}_{\tau}\longrightarrow\mathfrak g^*$, defined by
\be\label{parametrizedemm}
\mathcal{\tilde{E}}_{\tau}(\sigma,\eta)=\mathcal{\tilde{E}}_{\eta^{-1}\comp\tilde\tau}(\sigma):=\int_{\Sigma_\tau}\sigma^*\braket{\tilde{\mathcal J},\xi}\;.
\ee
Such a parametrized energy-momentum map is $\text{Ad}^*$-equivariant w.r.t. the action \eqref{cansecaction}, i.e.
\be\label{parametrizedenergymommapequivariance}
\braket{\tilde{\mathcal{E}}_\tau(\sigma,\eta),\text{Ad}^{-1}_{\alpha}\xi}=\braket{\alpha_{\tilde{\mathscr Z}_\tau}^*\bigl[\tilde{\mathcal{E}}_\tau(\sigma,\eta)\bigr],\xi}\;.
\ee
\end{theorem}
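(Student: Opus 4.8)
The plan is to prove the statement in two stages, matching its two assertions: first that \eqref{parametrizedemm} well-defines a $\gi^*$-valued object on $\tilde{\mathscr Z}_\tau$ genuinely \emph{induced} by the covariant multimomentum map, and then that this object is $\text{Ad}^*$-equivariant with respect to \eqref{cansecaction}. For the first assertion I would argue as follows. For each fixed $\xi\in\gi$ the $n$-form $\braket{\tilde{\mathcal J},\xi}$ on $\tilde{\mathcal Z}$ pulls back along a section $\sigma\in\tilde{\mathscr Z}_\tau=\Gamma(\Sigma_\tau,\tilde{\mathcal Z})$ to an $n$-form $\sigma^*\braket{\tilde{\mathcal J},\xi}$ on the $n$-dimensional slice $\Sigma_\tau$; being top-degree on the compact, oriented, boundaryless $\Sigma_\tau$ it is integrable, and since $\tilde{\mathcal J}(\xi)$ is linear in $\xi$ the assignment $\xi\mapsto\int_{\Sigma_\tau}\sigma^*\braket{\tilde{\mathcal J},\xi}$ defines an element $\tilde{\mathcal E}_\tau(\sigma,\eta)\in\gi^*$. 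That this is the momentum map genuinely \emph{induced} by $\tilde{\mathcal J}$ — its instantaneous reduction — follows from the slicing construction recalled above: with $\tilde\Omega_\tau=R_\tau^*\tilde\omega_\tau$ and $\ker T_\sigma R_\tau=\ker\tilde\Omega_\tau(\sigma)$, the integrated $n$-form descends to $\tilde{\mathscr Z}_\tau/\ker\tilde\Omega_\tau\cong T^*\tilde{\mathscr Y}_\tau$ and generates $\xi_{\tilde{\mathscr Z}_\tau}$ for the presymplectic form \eqref{PFTpresympl}, as established in \cite{gimmsyII}.

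The core of the theorem is the equivariance \eqref{parametrizedenergymommapequivariance}, which I would establish by a direct computation. Unfolding the definition and using linearity in the Lie-algebra slot,
\[
\braket{\tilde{\mathcal E}_\tau(\sigma,\eta),\text{Ad}^{-1}_\alpha\xi}=\int_{\Sigma_\tau}\sigma^*\,\tilde{\mathcal J}(\text{Ad}^{-1}_\alpha\xi)\;.
\]
Next I would invoke the $\text{Ad}^*$-equivariance of the covariant momentum map itself, $\tilde{\mathcal J}(\text{Ad}^{-1}_\alpha\xi)=\alpha_{\tilde{\mathcal Z}}^*\,\tilde{\mathcal J}(\xi)$, and functoriality of the pullback, $\sigma^*\comp\alpha_{\tilde{\mathcal Z}}^*=(\alpha_{\tilde{\mathcal Z}}\comp\sigma)^*$, to rewrite the integrand as $\int_{\Sigma_\tau}(\alpha_{\tilde{\mathcal Z}}\comp\sigma)^*\tilde{\mathcal J}(\xi)$. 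Finally I would change variables along the diffeomorphism $\alpha_\tau=\alpha_\mathcal X|_{\Sigma_\tau}\colon\Sigma_\tau\to\Sigma_{\tau'}$, where $\tau'=\alpha_{\tilde{\mathscr X}}(\eta)^{-1}\comp\tilde\tau=\alpha_\mathcal X\comp\tau$ so that $\Sigma_{\tau'}=\alpha_\mathcal X(\Sigma_\tau)$; since $\alpha_{\tilde{\mathscr Z}_\tau}(\sigma)\comp\alpha_\tau=\alpha_{\tilde{\mathcal Z}}\comp\sigma$ by \eqref{cansecaction}, the change-of-variables formula gives
\[
\int_{\Sigma_\tau}(\alpha_{\tilde{\mathcal Z}}\comp\sigma)^*\tilde{\mathcal J}(\xi)=\int_{\Sigma_{\tau'}}\bigl(\alpha_{\tilde{\mathscr Z}_\tau}(\sigma)\bigr)^*\tilde{\mathcal J}(\xi)=\braket{\alpha_{\tilde{\mathscr Z}_\tau}^*[\tilde{\mathcal E}_\tau(\sigma,\eta)],\xi}\;,
\]
which is exactly the right-hand side of \eqref{parametrizedenergymommapequivariance}.

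Two preliminary facts underpin this chain and deserve care. The equivariance of $\tilde{\mathcal J}$ used above is not an extra hypothesis but a consequence of the special-canonical character of the action: since $\tilde{\mathcal J}(\xi)=\text{i}_{\xi_{\tilde{\mathcal Z}}}\tilde\Theta$ and $\alpha_{\tilde{\mathcal Z}}^*\tilde\Theta=\tilde\Theta$, naturality of the interior product together with the identity $(\alpha_{\tilde{\mathcal Z}}^{-1})_*\xi_{\tilde{\mathcal Z}}=(\text{Ad}^{-1}_\alpha\xi)_{\tilde{\mathcal Z}}$ for fundamental vector fields yields $\alpha_{\tilde{\mathcal Z}}^*\,\text{i}_{\xi_{\tilde{\mathcal Z}}}\tilde\Theta=\text{i}_{(\text{Ad}^{-1}_\alpha\xi)_{\tilde{\mathcal Z}}}\tilde\Theta$. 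The main obstacle, and the genuinely field-theoretic point, is the second fact: under $\alpha$ the slice itself \emph{moves}, $\Sigma_\tau\mapsto\Sigma_{\tau'}=\alpha_\mathcal X(\Sigma_\tau)$, and this motion is dictated precisely by the transformation $\eta\mapsto\alpha_{\tilde{\mathscr X}}(\eta)=\eta\comp\alpha_\mathcal X^{-1}$ of the covariance field through $\tau=\eta^{-1}\comp\tilde\tau$. The delicate step is to verify that the extra factor $\alpha_\tau^{-1}$ appearing in the section transformation \eqref{cansecaction} is exactly the reparametrization needed for the integral over the displaced slice $\Sigma_{\tau'}$ to match the one over $\Sigma_\tau$ — i.e. that the change-of-variables bookkeeping cancels against the $\alpha_\tau^{-1}$ in $\alpha_{\tilde{\mathscr Z}_\tau}(\sigma)=\alpha_{\tilde{\mathcal Z}}\comp\sigma\comp\alpha_\tau^{-1}$ — and that $\alpha_\tau$ is orientation-preserving so that no sign intervenes. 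It is this interplay, the embedding's dynamical character carried by $\eta$ being the canonical counterpart of the \textsf{Diff}-action on covariance fields, that makes $\tilde{\mathcal E}_\tau$ equivariant under genuine spacetime diffeomorphisms rather than merely under slice-preserving ones.
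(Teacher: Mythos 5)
Your proof is correct and follows essentially the same route as the paper's: both rest on unfolding the definition of $\tilde{\mathcal{E}}_\tau$, invoking the $\text{Ad}^*$-equivariance of the covariant multimomentum map $\tilde{\mathcal J}$, using functoriality of the pullback, and performing a change of variables along $\alpha_\tau=\alpha_\mathcal X|_{\Sigma_\tau}$ between the original and displaced slices; you merely run the chain from left to right where the paper runs it from right to left. Your added justifications --- that $\tilde{\mathcal J}(\xi)=\text{i}_{\xi_{\tilde{\mathcal Z}}}\tilde\Theta$ together with $\alpha_{\tilde{\mathcal Z}}^*\tilde\Theta=\tilde\Theta$ yields the equivariance of $\tilde{\mathcal J}$, and that the $\alpha_\tau^{-1}$ in \eqref{cansecaction} is precisely what cancels the change-of-variables Jacobian --- are correct and make explicit what the paper leaves implicit, but do not change the argument.
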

\begin{proof}By explicitly writing the r.h.s. of Eq. \eqref{parametrizedenergymommapequivariance}, we get:
\begin{align}
\braket{\alpha_{\tilde{\mathscr Z}_\tau}^*\bigl[\tilde{\mathcal{E}}_\tau(\sigma,\eta)\bigr],\xi}&=\braket{\tilde{\mathcal{E}}_{\alpha_{\tilde{\mathscr{X}}}(\eta)^{-1}\comp\tilde\tau}(\alpha_{\tilde{\mathscr Z}_{\tau}}(\sigma))}\nonumber\\
&=\int_{(\eta\comp\alpha^{-1}_{\mathcal X})^{-1}\comp\tilde\tau(\Sigma)}\,(\alpha_{\tilde{\mathcal Z}}\comp\sigma\comp\alpha_\tau^{-1})^*\braket{\tilde{\mathcal J},\xi}\nonumber\\
&=\int_{\alpha_{\mathcal X}\comp(\eta^{-1}\comp\tilde\tau)(\Sigma)}\,(\alpha_\tau^{-1})^*\,\sigma^*\,\alpha_{\tilde{\mathcal Z}}^*\braket{\tilde{\mathcal J},\xi}\nonumber\\
&=\int_{\eta^{-1}\comp\tilde\tau(\Sigma)}\,\sigma^*\,\alpha_{\tilde{\mathcal Z}}^*\braket{\tilde{\mathcal J},\xi}\quad\,\qquad(\text{change of variables})\nonumber\\
&=\int_{\eta^{-1}\comp\tilde\tau(\Sigma)}\,\sigma^*\,\braket{\tilde{\mathcal J},\text{Ad}^{-1}_\alpha\xi}\qquad\,(\text{Ad$^*$-equivariance of $\tilde{\mathcal J}$})\nonumber\\
&=\braket{\tilde{\mathcal{E}}_{\eta^{-1}\comp\tilde\tau}(\sigma),\text{Ad}^{-1}_{\alpha}\xi}\nonumber\\
&=\braket{\tilde{\mathcal{E}}_\tau(\sigma,\eta),\text{Ad}^{-1}_{\alpha}\xi}\label{PEMMequivariance}\;,
\end{align}
which proves the desired equivariance property.
\end{proof}

\subsection{Energy-Momentum Map and Initial Value Constraints}

The energy-momentum map is intimately related to the initial value constraints which generate the covariant gauge freedom thus providing on the one hand a fundamental link between the dynamics and the gauge group, and on the other hand encoding in a single geometrical object all the physically relevant information about a given classical field theory \cite{gimmsyII}. 

To see this, let us denote by $\tilde{\mathscr{P}}_\tau$ the primary constraint submanifold in $T^*\tilde{\mathscr{Y}}_\tau$ defined as $\tilde{\mathscr{P}}_\tau=R_\tau(\tilde{\mathscr{N}}_\tau)\subset T^*\tilde{\mathscr{Y}}_\tau$ with $\tilde{\mathscr{N}}_\tau=\mathbb F\mathscr{L}((j^1\tilde{\mathscr{Y}})_\tau)\subset\tilde{\mathscr{Z}}_\tau$, $\mathbb F\mathscr{L}$ being the Legendre transform. For lifted actions -- and this is the case for a $\mathcal G$-slicing discussed before -- the projection $\tilde{\mathcal J}_H:\tilde{\mathscr{P}_\tau}\to\mathfrak g^*$ on $\tilde{\mathscr{P}_\tau}$ of the parametrized energy-momentum map \eqref{parametrizedemm} encodes the first class secondary constraints in its transverse and tangential components to the spatial slice. Indeed at the level of densities, using adapted coordinates and recalling the expressions \eqref{specialmmap}, \eqref{specialmmap2}, we have
\begin{eqnarray}\label{emmapdensity1}
\sigma^*(\text{i}_{\zeta_{\tilde{\mathcal Z}}}\tilde\Theta)=\Bigl[(p_A^0\comp\sigma)(\zeta^A\comp\sigma&-&\zeta^\mu\sigma^A_{,\mu})-(\varrho_a^0\comp\sigma)\zeta^\mu\sigma^a_{,\mu}\\ \nonumber
&+&\Bigl(\tilde p\comp\sigma+(p_A^\mu\comp\sigma)\sigma^A_{,\mu}+(\varrho_a^\mu\comp\sigma)\sigma^a_{,\mu}\Bigr)\zeta^0\Bigr]\dd^nx_0
\end{eqnarray}
for any $\sigma\in\tilde{\mathscr{Z}}_\tau$ and $\zeta_{\tilde{\mathcal Z}}$ the canonical lift to $\tilde{\mathcal Z}$ of the generator of the $\mathcal G$-slicing $\zeta_{\mathcal X}=\xi_{\mathcal X}$. Now, for any $\sigma$ \textit{holonomic lift} of $(\varphi,\Pi,\tau,P)$ to $\tilde{\mathscr{N}}_\tau$, that is $\sigma\in R_\tau^{-1}\{(\phi,\Pi,\tau,P)\}\cap\tilde{\mathscr{N}}_\tau$, we have $\sigma^A=\phi^A\bigl|_{\Sigma_\tau}=\varphi^A$ and $\sigma^a=\eta^a\bigl|_{\Sigma_\tau}$ so that
\be
\zeta^A\comp\sigma-\zeta^\mu\sigma^A_{,\mu}=(\zeta^A\comp\phi-\zeta^\mu\phi^A_{,\mu})\bigl|_{\Sigma_\tau}=-\left(\mathcal L_{\zeta}\phi\right)^A\bigl|_{\Sigma_\tau}=:-\dot\varphi^A\;,
\ee
and
\be
\tilde p\comp\sigma+(p_A^\mu\comp\sigma)\sigma^A_{,\mu}+(\varrho_a^\mu\comp\sigma)\sigma^a_{,\mu}=\tilde{L}(\sigma)\;,
\ee
where we used the expressions for the covariant Hamiltonian and the multimomenta (Eqs. \eqref{PFTmultimomenta}) and for the canonical momenta (Eq. \eqref{canonmomenta}). Hence, Eq. \eqref{emmapdensity1} yields
\be\label{emmapdensity2}
\sigma^*(\text{i}_{\zeta_{\tilde{\mathcal Z}}}\tilde\Theta)=-\left(\Pi_A\dot\varphi^A+P_a\zeta^\mu\eta^a_{,\mu}-\tilde{L}(\sigma)\zeta^0\right)\dd^nx_0\;,
\ee
from which, by using the fact that 
$$\tilde{L}(\sigma)\zeta^0\dd^nx_0=\tau^*\textit{i}_{\zeta_{\mathcal X}}\tilde{\mathscr L}(j^1\phi,j^1\eta)=\textit{i}_{\zeta_{\mathcal X}}\tilde{\mathscr L}(j^1\varphi,\dot\varphi,j^1\eta_\tau,\dot\eta_\tau),$$ 
it follows that the parametrized energy-momentum map \eqref{parametrizedemm} induces a functional on $\tilde{\mathscr{P}}_\tau$
\be
\tilde{\mathcal J}_H:\tilde{\mathscr{P}}_\tau\longmapsto\mathfrak g^*\;,
\ee
given by
\begin{eqnarray}\label{diffeqmommap}
\braket{\tilde{\mathcal J}_H(\varphi,\Pi,\tau,P),\zeta}&=&-\int_{\Sigma_\tau}\dd^nx_0(\zeta^\mu\mathcal H^{(\varphi)}_\mu+\zeta^\mu P_\mu)\\ \nonumber
&=&-\left(H^{(\varphi)}(\zeta)(\varphi,\Pi,\tau)+P(\zeta)(\tau,P)\right)\;,
\end{eqnarray}
where $P_\mu=\eta^a_{,\mu}P_a$ is the pull-back of $P_a$ to $\Sigma$ along $\eta$. The functional \eqref{diffeqmommap} on $\tilde{\mathscr{P}}_\tau$ is nothing but the total Hamiltonian, whose components in the tangential and transversal direction to the spatial slice yield the super-momenta and Hamiltonian constraints, respectively. 

As we will discuss in the next section, the moment map in \eqref{diffeqmommap} is equivariant w.r.t. the $\mathcal G$-action on the canonical parametrized phase space and it provides us with a representation of the Lie algebra of $\textsf{Diff}(\mathcal X)$ on the parametrized phase space. Differently from the standard case of instantaneous canonical formalism where the equivariance of the corresponding functional is spoiled by the algebra of constraints, in our derivation the key ingredient which leads to a Lie algebra (anti)homomorphism between the algebra $\textsf{diff}(\mathcal X)$ and the Poisson bracket algebra of observable functionals on the parametrized phase space relies on the introduction of the covariance fields. Indeed, as we will discuss in the next section, the presence of the covariance fields in turn allows to induce a corresponding $\mathcal G$-action on the space of embeddings of $\Sigma$ into $\mathcal X$. The action of the diffeomorphism group on the embeddings $\tau$ induced via the action on $\eta$ is what ensure the equivariance of the parametrized energy-momentum map (cfr. Eq. \eqref{PEMMequivariance}).

\subsection{Representation of Spacetime Diffeomorphisms: $\textsf{Diff}(\mathcal X)$-equivariant Momentum Map}\label{diffreprmmap}

Let $\mathcal G=\textsf{Diff}(\mathcal X)$ be the group of diffeomorphisms (i.e., smooth and invertible active point transformations\footnote{This should not be confused with the pseudo-group of passive transformations which describes the relations between overlapping pairs of coordinate charts. For details on the notion of passive and active diffeomorphisms and their connection we refer to \cite{RovelliDiffeo} and references therein.}) of the spacetime manifold $\mathcal X$. The Lie algebra $\mathfrak g=\textsf{diff}(\mathcal X)$ can be realized as the set of all (complete) vector fields on $\mathcal X$. Indeed, to any element $\xi\in\textsf{diff}(\mathcal X)$, we can associate a vector field $\xi_\mathcal X\in\mathfrak X(\mathcal X)$ generating a one-parameter group $\alpha_\lambda^{\xi}$ of spacetime diffeomorphisms by
\be
\frac{\dd}{\dd\lambda}\alpha_\lambda^{\xi}(x)=\xi_\mathcal X(\alpha_\lambda^{\xi}(x))\qquad\forall\,x\in\mathcal X\;.
\ee 
In the standard setting of the instantaneous canonical formalism, the generating vector field is decomposed into its ``lapse'' and ``shift'' components, i.e. it can be written as the sum of two vectors which are respectively normal and tangent vectors to the space-like hypersurface, say
\be\label{paralperpdecomposition1}
\xi_\mathcal X^\mu(\vec x)=N(\vec x)n^\mu(\tau(\vec x))+N^k(\vec x)\tau^\mu_{,k}(\vec x)
\ee
where $n^\mu=G^{\mu\nu}n_\nu$ is the future-pointing normal such that $\tau^*n=0$ for any $\vec x\in\Sigma_\tau$ and $G^{\mu\nu}n_\nu n_\nu=-1$, $N\in C^\infty(\Sigma,\mathcal X)$ is the \textit{lapse function}, and $\vec N\in T\Sigma$ the \textit{shift vector} of the foliation \cite{Giulini, ADM}. Lapse and shift respectively give the magnitude of the normal and tangential deformation at every point on a spatial hypersurface, specifying how the neighbouring slice is embedded in spacetime, and play the role of arbitrary Lagrange multipliers in the action implementing the first-class constraints of the theory.

The decomposition into perpendicular and tangential directions to the spatial hypersurfaces deforms the algebra $\textsf{diff}(\mathcal X)$. Indeed, the projected constraint functions do not form a genuine Lie algebra but rather a Lie algebroid structure known as the \textit{Dirac hypersurface-deformation algebra} \cite{Diracalgebra1,Diracalgebra2,hypersurfacedef}. After smearing with lapse and shift, the algebra reads\footnote{We refer to \cite{ishamkuchar1} Sec. 3.3 for the detailed calculation of the constraint algebra for the case of a canonical parametrized scalar field theory.}
\be\label{DHDAconstraints}
\begin{aligned}
&\{H[\vec N],H[\vec M]\}=H[\mathcal L_{\vec N}\vec M]\,,\\
&\{H[\vec N],H[M]\}=H[\mathcal L_{\vec N}M]\,,\\
&\{H[N],H[M]\}=H[\vec K]\,,
\end{aligned}
\ee
where $\mathcal L_{\vec N}$ is the Lie derivative along the vector field $\vec N$ (i.e., $\mathcal L_{\vec N}\vec M=[\vec N,\vec M]$ and $\mathcal L_{\vec N}M=N^kM_{,k}$), and $\vec K$ is such that
\be
K^j=Q^{jk}\left(NM_{,k}-MN_{,k}\right)\;.
\ee

Therefore, the Poisson brackets between two super-Hamiltonians $H[N]$, $H[M]$ do not only depend on the pair of shift functions $N$ and $M$, but also explicitly on the canonical embedding variable through the induced (inverse) 3-metric $Q^{ab}$. This implies that the full algebra of spacetime diffeomorphisms cannot be represented on the canonical parametrized phase space as it is not homomorphically mapped into a Poisson bracket algebra on the parametrized phase space. Only the subalgebra $\textsf{diff}(\Sigma)$ of spatial diffeomorphisms can be represented in the canonical formalism, as the map $\vec N\mapsto H[\vec N]$ provides a Lie algebra homomorphism of $\textsf{diff}(\Sigma)$ into a Poisson bracket algebra on the phase space of the system according to the first equation in \eqref{DHDAconstraints}. 

This reflects into the fact that the (\textit{instantaneous}) \textit{energy-momentum map}
\be
\mathcal E_\tau: T^*\textsf{Emb}_G(\Sigma,\mathcal X)\times T^*\mathscr Y_\tau\to\Lambda^0_d\times\Lambda^1_d\,,
\ee
given by
\be\label{instenergymomentummap}
\begin{aligned}
\mathcal E_\tau[N,\vec N]&=\int_\Sigma\dd^3x\braket{(N,\vec N),\mathcal E_\tau}\\
&=\int_\Sigma\dd^nx_0\,\left(N\mathcal H+N^k\mathcal H_k\right)=H[N]+H[\vec N]
\end{aligned}
\ee
with $\Lambda^0_d\times\Lambda^1_d$ the dual of the space of lapses and shifts\footnote{$\Lambda^0_d$ and $\Lambda^1_d$ respectively denote the spaces of function densities and 1-form densities on $\Sigma$.}, is not a true momentum map. Indeed, as it can be checked by direct computation, from the Dirac algebra \eqref{DHDAconstraints} it follows that
\be
\{\mathcal E_\tau[N,\vec N],\mathcal E_\tau[M,\vec M]\}=\mathcal E_\tau[\mathcal L_{\vec N}M-\mathcal L_{\vec M}N,\mathcal L_{\vec N}\vec M+\vec K]\,,
\ee
i.e., $\mathcal E_\tau$ is not (infinitesimally) equivariant.
Only if we restrict to the spatial diffeomorphisms, i.e., the subgroup $\mathcal G_\tau=\textsf{Diff}(\Sigma_\tau)$ of transformations which stabilize the image of $\tau$, the energy-momentum map \eqref{instenergymomentummap} induces a momentum map w.r.t. the $\mathcal G_\tau$-action, say
\be\label{spatialmomentummap1}
J_\tau:=\mathcal E_\tau\bigl|_{\mathfrak{g}_\tau}:T^*\textsf{Emb}_G(\Sigma,\mathcal X)\times T^*\mathscr Y_\tau\to\mathfrak{g}_\tau^*
\ee
such that
\be\label{spacediffmommap}
J_\tau[\vec N]=\int_\Sigma\dd^nx_0\,N^k\mathcal H_k=H[\vec N]\;,
\ee
which, as expected from the first of equations \eqref{DHDAconstraints}, is equivariant under the $\mathcal G_\tau$-action, thus providing a representation of the Lie algebra $\mathfrak{g}_\tau=\textsf{diff}(\Sigma_\tau)$ of spatial diffeomorphisms in terms of a Poisson bracket algebra of functionals on the extended phase space.

Nevertheless, the action of the full group of diffeomorphisms on the parametrized phase space of the theory can be recovered by taking into account the action of $\textsf{Diff}(\mathcal X)$ on the embedding fields. Specifically, the left action of $\textsf{Diff}(\mathcal X)$ on $\mathcal X$ induces a natural left action of $\textsf{Diff}(\mathcal X)$ on the space $\textsf{Emb}(\Sigma,\mathcal X)$ of all embeddings of $\Sigma$ in $\mathcal X$
\be\label{GactionEmb}
\Psi:\textsf{Emb}(\Sigma,\mathcal X)\times\textsf{Diff}(\mathcal X)\longrightarrow\textsf{Emb}(\Sigma,\mathcal X)\qquad\text{by}\qquad(\tau,\alpha_\mathcal X)\longmapsto \alpha_\mathcal X \comp \tau\;,
\ee
which carries the points $\tau(\vec x,\lambda)$ of the hypersurface $\Sigma_\tau=\tau(\Sigma)$ into new spacetime positions $\alpha_\mathcal X(\tau(\vec x,\lambda))$ forming a new hypersurface. Indeed, according to the action \eqref{actiononsections3} of the diffeomorphism group on $\eta$, we have
\begin{align}
\alpha\cdot\tau&=\alpha_{\tilde{\mathscr X}}(\eta)^{-1}\comp\tilde\tau\nonumber\\
&=\left(\eta\comp\alpha_\mathcal X^{-1}\right)^{-1}\comp\tilde\tau\nonumber\\
&=\alpha_{\mathcal X}\comp\eta^{-1}\comp\tilde\tau\nonumber\\
&=\alpha_{\mathcal X}\comp\tau\;.
\end{align}
The corresponding generating vector field, 
\be\label{Embvectorfield}
\xi_\tau(\vec x)=\xi_{\mathcal X}(\tau(\vec x))=\xi_\mathcal X^\mu(\tau(\vec x))\frac{\partial}{\partial x^\mu}\biggl|_{\tau(\vec x,\lambda)}\,,
\ee
yields a representation of the algebra $\textsf{diff}(\mathcal X)$ by vector fields on $\textsf{Emb}_G(\Sigma,\mathcal X)$. 

Following \cite{ishamkuchar1} and the  references therein, we think of the set $\textsf{Emb}_G(\Sigma,\mathcal X)$ of space-like embeddings of $\Sigma$ into $\mathcal X$ as an infinite-dimensional manifold.\footnote{Indeed, $\textsf{Emb}_G(\Sigma,\mathcal X)$ is a open subset of the set $\textsf{Emb}(\Sigma,\mathcal X)$ of all embeddings (not necessarily space-like) of $\Sigma$ into $\mathcal X$ which in turn is an open subset of the infinite-dimensional manifold $C^\infty(\Sigma,\mathcal X)$ of smooth functions from $\Sigma$ into $\mathcal X$ equipped with the compact-open topology, thus inheriting its differential structure.} The tangent space $T_{\tau}\textsf{Emb}_G(\Sigma,\mathcal X)$ at $\tau\in\textsf{Emb}_G(\Sigma,\mathcal X)$ is then defined as
$$
T_{\tau}\textsf{Emb}_G(\Sigma,\mathcal X):=\{\xi_\tau:\Sigma\to T\mathcal X\;|\;\xi_\tau(\vec x)\in T_{\tau(\vec x)}\mathcal X\;,\;\forall\,\vec x\in\Sigma\},
$$
and similarly
$$
T^*_{\tau}\textsf{Emb}_G(\Sigma,\mathcal X):=\{\gamma_\tau:\Sigma\to T^*\mathcal X\;|\;\gamma_\tau(\vec x)\in T^*_{\tau(\vec x)}\mathcal X\;,\;\forall\,\vec x\in\Sigma\},
$$
with $L^2$-dual pairing given by
\be\label{L2pairing}
\braket{\gamma_\tau,\xi_\tau}:=\int_\Sigma\dd^nx_0\sqrt{\det Q(\vec x)}\,\gamma_{\mu}(\tau(\vec x))\xi^\mu(\tau(\vec x))\;,
\ee
where $Q=\tau^*G$ is the induced metric on $\Sigma$. 

Although a generic diffeomorphism $\alpha\in\textsf{Diff}(\mathcal X)$ in general would not preserve the space-like nature of the embeddings, for any $\tau\in\textsf{Emb}_G(\Sigma,\mathcal X)$ there exists a open neighborhood of the identity in $\textsf{Diff}(\mathcal X)$ such that the transformed embedding $\alpha\comp\tau$ is still space-like. An element $\xi\in\textsf{diff}(\mathcal X)$, realized as a complete vector field $\xi_\mathcal X$ on $\mathcal X$, thus yields a vector field on $\textsf{Emb}_G(\Sigma,\mathcal X)$ by the prescription \eqref{Embvectorfield}. Such a vector field restricted to the embeddings can be then decomposed into the corresponding lapse and shift components which now become functionals of $\tau$. 

By taking this dependence into account, for any $\xi\in\textsf{diff}(\mathcal X)$ we can define a new Hamiltonian functional on the parametrized phase space, which is related to the equivariant momentum map \eqref{diffeqmommap} via
\begin{align}\label{equivnewham}
H(\xi)(\varphi,\Pi,\tau,P):&=-\braket{\xi,\tilde{\mathcal J}_H(\varphi,\Pi,\tau,P)}\nonumber\\
&=\int_\Sigma\dd^nx_0\,\xi_\mathcal X^\mu(\tau(\vec x))\mathcal H_\mu(\varphi,\Pi,\tau,P)\nonumber\\
&=\int_\Sigma\dd^nx_0\,\xi_\mathcal X^\mu(\tau(\vec x))\Bigl(\mathcal H^{(\varphi)}_\mu+P_\mu\Bigr)\,.
\end{align}

This result is compatible with the procedure introduced by Isham and Kucha\v{r} in \cite{ishamkuchar1} for the case of a parametrized scalar field theory.\footnote{In this case, the key step in representing spacetime diffeomorphisms was the observation that the embedding variables -- which tell how the model Cauchy surface $\Sigma$ lies in the spacetime manifold -- provide a link between the spatial and spatiotemporal pictures.}

At the infinitesimal level, the equivariance property of the Hamiltonian \eqref{equivnewham} can be seen as follows. We use the fact that the Poisson bracket between the unprojected constraint functions vanishes strongly and the canonical Poisson brackets $\{\tau^\mu(\vec x),P_\nu(\vec x^{\,\prime})\}=\delta^\mu_\nu\delta(\vec x,\vec x^{\,\prime})$ between the embeddings and their conjugate momenta as well as that the embeddings commute with $\mathcal H^{(\varphi)}$. Then, for any two Lie algebra elements $\xi,\zeta\in\textsf{diff}(\mathcal X)$, corresponding to vector fields $\xi_\mathcal X(\tau(\vec x)),\zeta_\mathcal X(\tau(\vec x))$ generating one-parameter groups of spacetime diffeomorphisms, we have

\begin{align}
\{H(\xi),H(\zeta)\}&=-\int_\Sigma\dd^nx_0\,[\xi_\mathcal X,\zeta_\mathcal X]^\mu\bigl|_{\tau(\vec x)}\mathcal H_\mu(\vec x)\nonumber\\
&=-H([\xi_\mathcal X,\zeta_\mathcal X])\nonumber\\
&=H([\xi,\zeta])\;,\label{diffcovmommapequiv}
\end{align}
where the Lie bracket $[\xi,\zeta]$ is defined as the opposite of the commutator between the corresponding vector fields, say
\be
[\xi,\zeta]=-[\xi_\mathcal X,\zeta_\mathcal X]=-\left(\xi_\mathcal X^\nu{\zeta_\mathcal X^\mu}_{,\nu}-\zeta_\mathcal X^\nu{\xi_\mathcal X^\mu}_{,\nu}\right)\frac{\partial}{\partial x^\mu}\,.
\ee
Thus, the mapping $\xi\mapsto H(\xi)$ is a (anti)homomorphism between the Lie algebra $\textsf{diff}(\mathcal X)$ and the Poisson bracket algebra of observable functionals on the parametrized phase space. This shows that for any $\xi\in\textsf{diff}(\mathcal X)$ the functional $H(\xi)$ (resp. $\tilde{\mathcal J}_H(\xi)$) defines a equivariant momentum map w.r.t. the action of the spacetime diffeomorphism group on the parametrized phase space. Such property reflects the equivariance of the parametrized energy-momentum map \eqref{parametrizedemm}, which in turn was provided by considering the action \eqref{GactionEmb} of diffeomorphisms on the embeddings induced via the covariance fields. 

Finally, the total Hamiltonian \eqref{equivnewham} is constructed in such a way that the constraints are preserved along the flow generated by $H(\xi)$, that is
\be\label{constraintflow}
\dot{\mathcal H}_\alpha(\vec x)=\int_\Sigma\dd^nx_0^{\prime}\{\mathcal H_\mu(\vec x),\xi_\mathcal X^\nu(\tau(\vec x^{\,\prime}))\}\mathcal H_\nu(\vec x^{\,\prime})
\ee
vanishes on the constraint surface. As any functional of the embedding commutes with $H^{(\varphi)}$, we have
\be\label{embvar}
\dot{\tau}^\mu(\vec x)=\int_\Sigma\dd^nx'_0\,\xi_\mathcal X^\nu(\tau(\vec x^{\,\prime}))\{\tau^\mu(\vec x),P_\nu(\vec x^{\,\prime})\}=\xi_\mathcal X^\mu(\tau(\vec x))\;,
\ee
i.e., $\xi_\mathcal X(\tau(\vec x))$ is the deformation vector of the foliation which can be decomposed into its transverse $\vec \xi_\parallel(\tau(\vec x))$ and normal $\xi_\perp(\tau(\vec x))$ components. Unlike the Lagrange multipliers $N$ and $\vec N$ entering the parametrized action, the transverse and normal components of the deformation vector are now specific functionals of the embedding. Moreover, since $P(\xi)$ commutes with the field variables, the rates of change of the field $\varphi$ and its conjugate momentum $\Pi$ yield the Hamiltonian field equations with deformation vector $\xi_\mathcal X(\tau(\vec x))$.

Therefore, the co-momentum map $H(\xi)$ defined in Eq. \eqref{equivnewham} generates the deformation of the embedding induced by the vector field $\xi_\mathcal X$ on $\mathcal X$ together with the dynamical evolution of the field variables. Since the constraints are preserved along this flow, on-shell field configurations are compatibly evolved along the constraint surface. In other words, the canonical action of $\textsf{Diff}(\mathcal X)$ represented by $H(\xi)$ generates a displacement of the spatial hypersurface embedded in spacetime and it sets the consistently evolved Cauchy data for fields on the deformed hypersurface. 

The explicit embedding-dependence of the induced vector field $\xi_\mathcal X$ on $\mathcal X$, in  \eqref{equivnewham}, implies that $H(\xi)$ comes to be the Hamiltonian function along the flow lines generated by $\xi_\mathcal X$, which correspond to a one-parameter family of emebeddings identifying a foliation with deformation vector $\xi_\mathcal X(\tau(\vec x))$. Thus, for any $\xi\in\textsf{diff}(\mathcal X)$, realized as a complete vector field on $\mathcal X$, the induced vector field on $\textsf{Emb}_G(\Sigma,\mathcal X)$ is the tangent vector field to a curve of embeddings which identifies a foliation of spacetime. Different Lie algebra elements would identify different foliations, with corresponding deformation vectors given by the induced vector fields on $\mathcal X$ restricted to the embedding. In this sense the momentum map $H(\xi)$ (resp. $\tilde{\mathcal J}_H(\xi)$) provides a faithful representation of the Lie algebra of $\textsf{Diff}(\mathcal X)$ on the parametrized phase space\footnote{Consistently, as discussed in Appendix \ref{spatialdiffequilibrium}, by restricting to the diffeomorphisms which preserve the spatial slice $\Sigma$, the momentum map \eqref{equivnewham} reduces to \eqref{spacediffmommap}, which in turn yields a representation of the subalgebra of spatial diffeomorphisms.}.

\subsection{Generally Covariant Gibbs State} \label{covariantgibbs}

We now proceed to define a Lie-group generalization of the Gibbs equilibrium state for parametrized field theories, in terms of the Hamiltonian action of the spacetime diffeomorphism group $\textsf{Diff}(\mathcal X)$.

A statistical state $\rho:\Upsilon\to\mathbb R([0,+\infty[)$ on the parametrized phase space of the theory $\Upsilon\equiv T^*\mathscr Y_\tau\times T^*\textsf{Emb}_G(\Sigma,\mathcal X)$, is a smooth probability density on $\Upsilon$ such that, for any Borel subset $\mathscr A$ of $\Upsilon$, the integral
\be
\mu(\mathscr A)=\int_{\mathscr A}\,\mathcal D[\varphi,\Pi,\tau,P]\,\rho(\varphi,\Pi,\tau,P)
\ee
defines a probability measure on $\Upsilon$ with the normalization condition
\be\label{CPFTnormaliz}
Z(\rho)=\int_\Upsilon\,\mathcal D[\varphi,\Pi,\tau,P]\,\rho(\varphi,\Pi,\tau,P)=1\;,
\ee
where $\mathcal D[\varphi,\Pi,\tau,P]$ formally denotes the integration measure on $\Upsilon$ (assumed to be $\textsf{Diff}(\mathcal X)$-invariant). To such a statistical state, we can associate a entropy functional
\be\label{CPFTentropy}
S(\rho)=-\int_\Upsilon\mathcal D[\varphi,\Pi,\tau,P]\,\rho(\varphi,\Pi,\tau,P)\log\rho(\varphi,\Pi,\tau,P)\;,
\ee
with the convention that $\rho\log\rho=0$ for $\rho=0$. Given a functional on $\Upsilon$, say $f\in\mathcal F(\Upsilon)$, the expected value of $f$ w.r.t. $\rho$ is then defined as
\be
\mathbb{E}_\rho[f]=\int_\Upsilon\mathcal D[\varphi,\Pi,\tau,P]\rho(\varphi,\Pi,\tau,P)f(\varphi,\Pi,\tau,P):\mathcal F(\Upsilon)\longrightarrow\mathbb R \ .
\ee

We derive the Gibbs state via \emph{maximum entropy principle} \cite{Jaynes}, 
by asking for the stationarity of the entropy functional \eqref{CPFTentropy} under infinitesimal smooth variation $\rho_s(\varphi,\Pi,\tau,P)$ with $s\in]-\varepsilon,\varepsilon[$, $\varepsilon>0$ of the statistical state $\rho$, for fixed expected value of the co-momentum map, namely 
\be
\mathbb E_\rho(\tilde{\mathcal J}_H(\xi))=\int_\Upsilon\mathcal D[\varphi,\Pi,\tau,P]\rho(\varphi,\Pi,\tau,P)\braket{\xi(\tau),\tilde{\mathcal J}_H(\varphi,\Pi,\tau,P)}= const \ .
\ee
This is implemented by introducing two \emph{real} Lagrange multipliers 
$a,b \in\mathbb R$ respectively associated to the normalization condition \eqref{CPFTnormaliz} and the constraint $\mathbb{E}_\rho(\tilde{\mathcal J}_H(\xi))=const$, via
\be\label{con2}
\mathcal S(\rho_s)=S(\rho_s)+b\,\mathbb E_{\rho_s}(\tilde{\mathcal J}_H(\xi))+a\,Z(\rho_s)\;,
\ee
such that
\be
\frac{\delta\mathcal S(\rho_s)}{\delta s}\biggl|_{s=0}=0\qquad\forall\;\rho_s\;.
\ee
Hence, we get
\begin{align}\label{kkk}
0&=\frac{\delta\mathcal S(\rho_s)}{\delta s}\biggl|_{s=0}\nonumber\\
&=-\int_\Upsilon\mathcal D[\varphi,\Pi,\tau,P]\Bigl(1+\log\left(\rho(\varphi,\Pi,\tau,P)\right)- b\,\tilde{\mathcal J}_H(\xi)(\varphi,\Pi,\tau,P)-a\Bigr)\frac{\delta\rho_s}{\delta s}\biggl|_{s=0}\;,
\end{align}
for any $\rho_s$, from which it follows that
\be
\rho_{a,b}(\varphi,\Pi,\tau,P)=\exp{\Bigl(-1+a+b\braket{\xi(\tau),\tilde{\mathcal J}_H(\varphi,\Pi,\tau,P)}\Bigr)}.
\ee
The normalization condition \eqref{CPFTnormaliz} then implies
\be\label{CPFTpartitionfunction}
Z(\xi,b)=\exp{(1-a)}=\int_\Upsilon\mathcal D[\varphi,\Pi,\tau,P]\exp{\Bigl(b\,\tilde{\mathcal J}_H(\xi)(\varphi,\Pi,\tau,P)}\Bigr)\,,
\ee
where we limit $b$ to the set of values such that the above integral \emph{converge}. The generally covariant Gibbs statistical state is then given by
\begin{align}
\rho^{\text{(eq)}}_{\xi_{(b)}}(\varphi,\Pi,\tau,P)&=\frac{1}{Z(\xi,b)}\exp{\Bigl(b\,\braket{\xi,\tilde{\mathcal J}_H(\varphi,\Pi,\tau,P)}\Bigr)}\nonumber\\
&=\frac{1}{Z(\xi,b)}\exp{\left(-\int_{\Sigma}\dd^nx_0\,\xi_{(b)}^\mu(\tau(\vec x))\left(\mathcal H^{(\varphi)}_\mu(\vec x)+P_\mu(\vec x)\right)\right)}
\label{covarianteqstate}
\end{align}
The equilibrium temperature $b$ acts as a \textit{global} scaling of $\xi \in\textsf{diff}(\mathcal X)$, which can be interpreted as a geometric generalization (algebra-valued) of a \emph{local} equilibrium temperature and depends of the realization of the algebra on the embedding space (cfr. \eqref{Embvectorfield}). The product $\xi_{(b)}=b\,\xi \in\textsf{diff}(\mathcal X)$ defines a globally scaled vector field on $\mathcal X$ generating a one-parameter family of spacetime diffeomorphisms (scaled thermal time).

\begin{remark}
The result in \eqref{covarianteqstate} is close in form to Souriau's generalized Gibbs state (cfr. Sec. \ref{LGT}), but different in a key aspect due to covariance. In fact, in Lie Group Thermodynamics, the variational principle requires the momentum map associated to the $\mathcal{G}$-slicing preserving - one-parameter subgroup of the diffeomorphisms to be constant. Because of
$$
\mathbb E_\rho(\tilde{\mathcal J}_H(\xi))= \braket{\xi,\mathbb E_\rho(\tilde{\mathcal J}_H)} \ ,
$$ 
one can use the element of the algebra $\xi \in \textsf{diff}(\mathcal X)$ as the Lagrangian multiplier which eventually is identified with a global (inverse) temperature vector. Differently, in our general covariant approach, the explicit dependence on the embedding space in the realization of the algebra vectors implies that
\be
\mathbb E_\rho(\tilde{\mathcal J}_H(\xi))\neq \braket{\xi,\mathbb E_\rho(\tilde{\mathcal J}_H)} \ ,
\ee
hence the Lie group generalization of the stationarity condition in \eqref{kkk} does not work. In particular, the dependence of the algebra realizations on the embeddings space makes Souriau's geometric temperature local. Nevertheless, whenever the generalized Gibbs states \eqref{covarianteqstate} can be defined, we can think of it as the result of the exponentiation of a \emph{local} thermodynamic eqilibrium state w.r.t. a constant $b$, 
\begin{align}
\left(\exp{\Bigl(\braket{\xi,\tilde{\mathcal J}_H(\varphi,\Pi,\tau,P)}\Bigr)}\right)^b
\label{covarianteqstate}
\end{align}
multiplied by a $b$-dependent factor that preserves the normalization: $\rho_b = Z^{-1}(b)\, \rho^b$ \cite{43}. 
The effect of this exponentiation is to scale the thermal time globally, and therefore to scale the temperature globally.
\end{remark} 

The statistical state \eqref{covarianteqstate} is now a functional of the fields $(\varphi,\Pi,\tau,P)$ through the scaled comomentum map functional $b\,\braket{\xi,\tilde{\mathcal J}_H}$. In particular, being a functional on the parametrized phase space, the dependence from the spacetime coordinates occurs only through the dynamical variables thus respecting the coordinate-independence of relativistic theories. Moreover, as a functional of the embeddings, the statistical state \eqref{covarianteqstate} is covariant in the sense that the momentum map is evaluated on any space-like hyper-surface without fixing the slicing a priori. The one-parameter group of automorphisms of the extended configuration space generated by $\xi_{(b)}\in\textsf{diff}(\mathcal X)$ identifies a generalized concept of ``time evolution'' w.r.t. which the Gibbs state is of equilibrium\footnote{As discussed in Appendix \ref{spatialdiffequilibrium}, the restriction to the subalgebra of hyper-surface preserving spatial diffeomorphisms yields an equilibrium Gibbs state under one-parameter subgroups of spatial diffeomorphisms.}.

\subsection{Covariant Gauge Group Thermodynamics}\label{GTF}

Let us assume that the Hamiltonian action of the -- $\mathcal{G}$-slicing preserving -- one-parameter subgroup of diffeomorphisms, and its covariant moment map $\tilde{\mathcal J}_H:\tilde{\mathscr{P}}_\tau\longmapsto\mathfrak g^*$ are such that the ensemble of generalized (inverse) temperatures $\Omega \subset \mathbb{R}$ is non-empty.

Within the global equilibrium setting, the standard thermodynamic relations can be derived in terms of partial derivatives in $b$ of the partition function $Z(b)$, with the co-momentum map playing the role of the energy of the fields in thermodynamics . 

Given $Z(\xi,b)$, the free energy potential, $F(\xi,b)\equiv -\log Z(\xi,b)$, encodes complete thermodynamic information about the system. The \emph{equilibrium} internal energy $Q:\Omega\to\mathbb{R}$ is given by the gradient of $F(b)$
\begin{align}
Q(\xi,b)&=-\partial_b(\log Z(\xi,b))\nonumber\\
&=-\int_\Upsilon\mathcal D[\varphi,\Pi,\tau,P]\rho_{\xi_{(b)}}(\varphi,\Pi,\tau,P)\,\tilde{\mathcal J}_H(\xi)(\varphi,\Pi,\tau,P)\nonumber\\
&=-\mathbb{E}_{\rho_{\xi_{(b)}}}[\tilde{\mathcal J}_H(\xi)]
=\mathbb{E}_{\rho_{\xi_{(b)}}}[H^{(\varphi)}(\xi)]+\mathbb{E}_{\rho_{\xi_{(b)}}}[P(\xi)]\;,\label{inten}
\end{align}
corresponding to the average co-momentum map in the generalised Gibbs ensemble. Note that the internal energy is composed of two contributions, respectively associated to matter fields and embeddings. However, the two are not independent, as a dependence on the embeddings occurs on both terms in the r.h.s. of \eqref{inten} via the Gibbs state averaging. 
Exactly as for standard equilibrium thermodynamics, to a given $Q$ corresponds at most one value of $\xi_b$, so that $F(\xi,b)$ and the equilibrium probability density $\rho_{\xi_{(b)}}$ are uniquely determined. 

The entropy function $S: \Omega \to \mathbb{R}$, as defined in \eqref{CPFTentropy}, has a strict maximum at equilibrium given by
\be\label{eqent}
 S(\xi,b)=-F(\xi,b) - b\, \mathbb{E}_{\rho_{\xi_{(b)}}}[\tilde{\mathcal{J}}_H(\xi)] = -F(\xi,b) + b\, \mathbb{E}_{\rho_{\xi_{(b)}}}[{H}^{(\varphi)}(\xi)]+ b\, \mathbb{E}_{\rho_{\xi_{(b)}}}[P(\xi)]\;.
 \ee
Note that, at equilibrium, the equivariance of the thermodynamic potentials is preserved. 



Thermodynamic relations measure changes in such potentials, induced by infinitesimal departures from equilibrium 
(see e.g. \cite{jarz}). 
 We evolve the system from one equilibrium state to another, while generally driving it away from equilibrium. In the global equilibrium setting, this can be achieved either via a \emph{isothermal} transformation, which leaves the global temperature unchanged while moving the system on a different gauge orbit ($\xi_i \neq \xi_f$), or via a \emph{adiabatic} transformation, corresponding to an overall shift along the same gauge orbit. 
 
 A convenient measure of the difference between initial and final equilibrium states is given by Kullback-Leibler divergence \cite{amari}, which in our setting reads
\begin{equation*}
  D(\rho_{i}|\rho_{f})=\int_\Upsilon\,\mathcal D[\sigma]  \rho_{i}(\sigma) \log\left(\frac{\rho_i(\sigma)}{\rho_f(\sigma)}\right) \ge 0\ ,
\end{equation*}
where $\sigma$ synthetically denotes the elements in $\Upsilon\equiv T^*\mathscr Y_\tau\times T^*\textsf{Emb}_G(\Sigma,\mathcal X)$ as to ease the notation.

\noindent
For the isothermal case, we have 
\begin{eqnarray}\label{line}
D(\rho_{i,b}|\rho_{f,b})&=&\int_\Upsilon\,\mathcal D[\sigma] \rho_{i,b}(\sigma)\log\left(\frac{e^{ b\,\tilde{\mathcal J}_{H}(\xi_i)}/Z_i(b)}{e^{ b\,\tilde{\mathcal J}_{H}(\xi_f)}/Z_f(b)}\right)\\ 
&=& -\log Z_i(b) + b\,\mathbb{E}_{\rho_{i,b}}[\tilde{\mathcal J}_{H}(\xi_i)] +\log Z_f(b) - b\, \mathbb{E}_{\rho_{i,b}} [\tilde{\mathcal J}_{H}(\xi_f) ] \\ \nonumber
&=&\log Z_f(b) - \log Z_i(b) - b\,  \mathbb{E}_{\rho_{i,b}}[\tilde{\mathcal J}_{H}(\xi_f) -\tilde{\mathcal J}_{H}(\xi_i)] \ge 0
\end{eqnarray}
so that, by interpreting $W\equiv - \left(\tilde{\mathcal J}_{H}(\xi_f) -\tilde{\mathcal J}_{H}(\xi_i)\right)$ as the work  associated to the change of the element of the Lie algebra $\xi_i \rightarrow \xi_f $, we have
\bes
b\, \mathbb{E}_{\rho_{i,b}}(W) \ge \Delta F \equiv F_{f}(b) -F_{i}(b) \ .
\ees
The positivity of the divergence tells us that the external work performed on the system is no less than the free energy difference between the initial and final state. 

 Notice that, at the infinitesimal level, the work is noting but the co-momentum map for the Lie-bracket, namely $W=-\tilde{\mathcal J}_{H}([\xi_f, \xi_i])$ . Given the equivariance \eqref{diffcovmommapequiv} of the co-momentum map, this is the Poisson-bracket between the two co-momentum maps associated to the two Lie algebra elements $\xi_i$ and $\xi_f$, namely $ W =-\{\tilde{\mathcal J}_{H}(\xi_i), \tilde{\mathcal J}_{H}(\xi_f)\}$. 
 When the parameter is varied slowly enough so that the system remains in equilibrium along the flow, then the process is reversible and isothermal, and $\mathbb{E}_{i,b}( W) = \Delta F$. The case where the same two states at $[i,b]$ and $[f,b]$ are related via the action of a $\mathcal{G}$-slicing preserving one-d diffeomorphism subgroup is apparent: moving along an coadjoint orbit in $\mathfrak{g}^*$, by definition, we have $\Delta F=0$.
\

In case of \emph{adiabatic} transformation, where we imagine to keep $\tilde{\mathcal J}$ (or $\xi$) fixed, while changing the temperature $b_i \rightarrow b_f$, we can measure the distance between initial and final states starting from \eqref{line}, and applying the general definition of the entropy of the Gibbs state, $S(b)= \log Z(b)-b\,  \mathbb{E}_{\rho_b}[\tilde{\mathcal J}_H(\xi)]$ to get
\begin{eqnarray}
S(b_f)-S(b_i) &\ge&  - b_f \, \mathbb{E}_{\rho_{b_f}}[\tilde{\mathcal J}_H(\xi)]+ b_f\, \mathbb{E}_{\rho_{b_i}}[\tilde{\mathcal J}_H(\xi)]\\ \nonumber
&=& - b_f\, \left(  \mathbb{E}_{b_f}[\tilde{\mathcal J}_H(\xi)]- \mathbb{E}_{b_i}[\tilde{\mathcal J}_H(\xi)]\right) \\ \nonumber
&=&  - b_f\, \Delta Q\;.
\end{eqnarray}
Hence, by changing the global temperature, while keeping the reference observer/foliation associated with the generating vector field $\xi$, the entropy of the system increases, while the reservoir entropy decreases by an amount $b_f \Delta Q$, with an overall change in entropy    
\bes
\Delta S+ b_f \Delta Q \ge 0 \ .
\ees
This is the Clausius inequality of classical thermodynamics, which expresses the essential statement of the second law of thermodynamics.

A slow variation of the state realised via a series of equilibrium states defines a reversible \emph{adiabatic} transformation, with 
\be\label{dsss}\Delta S=- b_f \Delta Q = b_f \left(\Delta \langle H^{(\varphi)}(\xi)\rangle +\Delta \langle P(\xi)\rangle\right) \ .
\ee

The change in the global scaling $b_i\,\xi \to b_f\,\xi \in\textsf{diff}(\mathcal X)$ results in an increase of the entropy of the state of the fields. 
Differently form the isothermal process, here we do not perform a change on the embedding. In particular, the change in the entropy is not zero despite being measured along the same coadjoint orbit in $\mathfrak{g}^*$.\footnote{A remarkable difference w.r.t. the same relation when derived in the formal setting of Lie group thermodynamics (cfr. \cite{Souriau_b,Souriau,Marle}).}

\section{From Covariant Equilibrium to Dynamical Evolution}\label{EquiandDynam}


We are now interested in the relation between covariant equilibrium and dynamical evolution for the fields. In this sense, we are going to study how our generalized notion of Gibbs state on the extended phase space reduces by a gauge-fixing of the diffeomorphism symmetry, so as to disentangle the dynamics encoded in the constraints.

\subsection{Time Evolution Gibbs State via Gauge Fixing}\label{gaugefixedeqstate}

As discussed in Sec. \ref{diffreprmmap}, the equivariant co-momentum map in  \eqref{equivnewham} generates the evolution of the Cauchy data for the fields and the constraints among neighbouring spatial hypersurfaces. The vector field $\xi \in\textsf{diff}(\mathcal X)$ is the generating vector field of a spacetime foliation (cfr. Eq. \eqref{embvar}). Correspondingly, the one-parameter subgroup of diffeomorphisms associted with the scaled Lie algebra element $b\, \xi$ identifies a generalized notion of ``dynamical evolution'' w.r.t. which the covariant Gibbs state \eqref{covarianteqstate} is stable. 

As expected, for generally covariant systems, dynamics and gauge symmetry are deeply intertwined, in such a way that no preferred notion of time is available and the Hamiltonian is a combination of constraints. This makes the connection between the off- and on-shell levels of our covariant statistical analysis quite subtle. 


At first sight, it might seem difficult to extract the dynamical evolution of the physical fields as the Hamiltonian vanishes after imposing the constraints. A way to solve this issue is to introduce suitable gauge-fixing conditions such that the gauge evolution is frozen and the dynamics of fields can be disentangled.

As discussed in \cite{gaugefix1,gaugefix2}, in reparametrization invariant systems such as relativistic particles, general relativity or any diffeomorphism covariant field theory, at least one of the gauge-fixing conditions must depend on the time variable.\footnote{Here by time we mean the evolution parameter entering the canonical formulation of the theory.} This makes the application of the Dirac algorithm of constraints and gauge-fixing more delicate.\footnote{A detailed study of Hamiltonian formalism for systems with explicitly time-dependent second-class constraints has been carried out in \cite{timedepGF1,timedepGF2,timedepGF3} and reference within.} 

The key point is that in the case of time-dependent gauge-fixing constraints, the correct dynamical evolution for physical fields on the reduced phase space is generated by a new Hamiltonian which is not given by the restriction of the original Hamiltonian on the extended phase space (which vanishes on the constraint surface). The new Hamiltonian is determined by the \emph{time-dependent} Hamilton's equations of motions written in terms of the brackets on the reduced phase space induced by the Dirac bracket. The explicit form of this Hamiltonian of course depends on the details of the gauge-fixing conditions.

Following \cite{timedepGF1,timedepGF2,timedepGF3}, let us specialize the gauge-fixing procedure to the parametrized phase space of our theory $\Upsilon=T^*\mathscr Y_\tau\times T^*\textsf{Emb}_G(\Sigma,\mathcal X)$. Here, the canonical variables are $(\varphi^A(\vec x),\Pi_A(\vec x), \tau^\mu(\vec x),P_\mu(\vec x))$, $A=1,\dots,N$, $\mu=0,\dots,n$, with the first-class constraints $\mathcal H_\mu=P_\mu+\mathcal H_\mu^{(\varphi)}\approx0$ generating the gauge symmetries, and a Hamiltonian given by the co-momentum map $H(\xi)$ written in \eqref{equivnewham}. The canonical variables on $\Upsilon$ can be split into two disjoint sets $(\varphi^A,\Pi_A)$ and $(\tau^\mu,P_\mu)$, namely the physical fields of the theory (and their conjugate momenta) and the auxiliary field variables resulting from the parametrization procedure at the canonical level. The latter are the canonical variables we would like to eliminate by gauge-fixing. 

To this aim, let us introduce gauge-fixing conditions of the form
\be\label{gfconditions}
\chi^\mu=\tau^\mu-F^\mu(\varphi^A,\lambda)\quad,\qquad\mu=0,\dots,n
\ee
\noindent
where the $F^\mu$ are functions depending on the configuration field variables $\varphi^A$ as well as on the evolution parameter $\lambda$. The conditions $\chi^\mu=0$ provide us with a \emph{complete} set of gauge-fixing constraints. The set of all constraints, which will be denoted as $\psi_I=(\mathcal H_\mu,\chi^\nu)$, is now a second-class set (i.e., $\{\mathcal H_\mu,\chi^\nu\}\not\approx0$), and the conditions $\chi^\mu=0$ eliminate all the gauge freedom, namely
\be
\{\chi^\mu,\int\dd^nx_0\,\epsilon^\nu\mathcal H_\nu\}=0\quad,\quad\forall\lambda\quad\Rightarrow\quad\epsilon^\mu=0\;.
\ee
\noindent
In particular, the set of constraints $\psi_I=(\mathcal H_\mu,\chi^\nu)=0$ identifies the reduced phase space $\overline\Upsilon\subset\Upsilon$, where the gauge-fixed dynamics takes place. 
Denoting by $(\bar\varphi^A,\bar\Pi_A)$ a set of canonical variables on $\overline\Upsilon$, the restriction of the dynamical evolution to $\overline\Upsilon$ amounts to seek for a Hamiltonian $\bar H$ such that an equation of the kind
\be\label{reddynamics}
\frac{\dd f}{\dd\lambda}\approx\frac{\partial f}{\partial\lambda}+\{f,\bar H\}_*
\ee

\noindent
holds for any function $f(\bar\varphi^A,\bar\Pi_A,\lambda)$. Here, the weak equality $\approx$ denotes equality up to terms which vanish on $\overline\Upsilon$, and $\{\cdot,\cdot\}_*$ is the Dirac bracket defined by
\be
\{\cdot,\cdot\}_*:=\{\cdot,\cdot\}-\{\cdot,\psi_I\}\mathcal C^{IJ}\{\psi_J,\cdot\}
\ee
\noindent
with $\mathcal C^{IJ}=(\mathcal C)^{-1}_{IJ}$ the inverse of the matrix $\mathcal C$ whose entries are given by the Poisson brackets of all constraints, say $\mathcal C_{IJ}=\{\psi_I,\psi_J\}$\footnote{As $\psi_I=(\mathcal H_\mu,\chi^\nu)$ is a second class set, the matrix $\mathcal C$ is non-degenerate that is $\det\|\{\psi_I,\psi_J\}\|\neq0$.}.
By construction $\{f,\psi_I\}_*=0$ holds strongly for any second class constraint $\psi_I$ and any function $f$ on $\Upsilon$. All constraints (now second class) can thus be imposed strongly inside the bracket to eliminate the corresponding number of variables and obtain a well-defined induced bracket on $\overline\Upsilon$. The trajectories in $\Upsilon$ solving \eqref{reddynamics} project onto trajectories in $\overline\Upsilon$ lying in $\overline\Upsilon$ for all time as implied by the stability of the constraints $\frac{\dd\psi_I}{\dd\lambda}\approx0$. 

The Hamiltonian $\bar H$ is given by \cite{timedepGF1,timedepGF2,timedepGF3}
\be\label{newhamgf}
\bar H=H-\int_\Sigma\dd^nx_0\,\frac{\partial F^\mu}{\partial\lambda}P_\mu\;,
\ee

\noindent
which, as anticipated, is different from just $H$ restricted on $\overline\Upsilon$.
To see that $\bar H$ generates \eqref{reddynamics}, let us notice that since the gauge-fixing conditions \eqref{gfconditions} commute (strongly) among themselves, there exists a canonical transformation $(\varphi^A,\Pi_A,\tau^\mu,P_\mu)\mapsto(\bar\varphi^A,\bar\Pi_A,\bar Q^\mu,\bar P_\mu)$ such that $\bar Q^\mu=\chi^\mu$. The generating function of such a transformation is given by
\be
\mathfrak F(\varphi^A,\tau^\mu,\bar\Pi_A,\bar P_\mu,\lambda)=\varphi^A\bar\Pi_A+\Bigl(\tau^\mu-F^\mu(\varphi^A,\lambda)\Bigr)\bar P_\mu\;,
\ee

\noindent
so that we have
\be\label{cantransfgf}
\begin{cases}
\bar Q^\mu=\frac{\partial\mathfrak F}{\partial\bar P_\mu}=\tau^\mu-F^\mu(\varphi^A,\lambda)=\chi^\mu\\
\bar\varphi^A=\frac{\partial\mathfrak F}{\partial\bar\Pi_A}=\varphi^A\\
P_\mu=\frac{\partial\mathfrak F}{\partial\tau^\mu}=\bar P_\mu\\
\Pi_A=\frac{\partial\mathfrak F}{\partial\varphi^A}=\bar\Pi_A-\frac{\partial F^\mu}{\partial\varphi^A}\bar P_\mu
\end{cases}
\ee

\noindent
and Hamiltonian given by
\be
\bar H=H+\int_\Sigma\dd^nx_0\,\frac{\partial\mathfrak F}{\partial\lambda}=H-\int_\Sigma\dd^nx_0\,\frac{\partial F^\mu}{\partial\lambda}\bar P_\mu\;.
\ee

\noindent
In the new variables, the gauge-fixing conditions are now part of the field configuration variables and have no explicit $\lambda$-dependence anymore. Hence, stability of the gauge-fixing constraints simply amounts to require that
\be\label{newstabilitygf}
\dot{\bar Q}^\mu=\frac{\dd\chi^\mu}{\dd\lambda}=\{\chi^\mu,\bar H\}\approx0\;.
\ee

\noindent
As can be checked by direct computation, from the constraint algebra it follows that the matrix $\mathcal C=\|\{\psi_I,\psi_J\}\|$ takes the block form
\be
\mathcal C_{IJ}=
\left(
\begin{array}{c|c}
0 & \;\;\mathcal A_{\mu\nu} \\
\hline
-\mathcal A_{\mu\nu} & \;\;0
\end{array}
\right)\qquad,\qquad\mathcal A_{\mu\nu}=\{\mathcal H_\mu,\chi_\nu\}
\ee
\noindent
so that for any function $f$ on $\Upsilon$ the Dirac bracket yields
\begin{align}\label{DBfham}
\{f,\mathcal{\bar{H}}\}_*&=\{f,\mathcal{\bar{H}}\}+\sum_{\mu,\nu}\Bigl(\{f,\mathcal H_\mu\}(\mathcal A^{-1})_{\mu\nu}\{\chi_\nu,\mathcal{\bar H}\}-\{f,\chi_\mu\}(\mathcal A^{-1})_{\mu\nu}\{\mathcal H_\nu,\mathcal{\bar H}\}\Bigr)\nonumber\\
&\approx\{f,\mathcal{\bar{H}}\}-\sum_{\mu,\nu}\{f,\chi_\mu\}(\mathcal A^{-1})_{\mu\nu}\{\mathcal H_\nu,\mathcal{\bar H}\}\;,
\end{align}
\noindent
where in the last line we used Eq. \eqref{newstabilitygf}. Now, as $\bar Q^\mu=\chi^\mu$, the second term on the r.h.s. of \eqref{DBfham} vanishes if we restrict $f=f(\bar\varphi^A,\bar\Pi_A,\lambda)$. Thus, for any function $f(\bar\varphi^A,\bar\Pi_A,\lambda)$ we get
\be
\frac{\dd f}{\dd\lambda}=\frac{\partial f}{\partial\lambda}+\{f,\bar H\}\approx\frac{\partial f}{\partial\lambda}+\{f,\bar H\}_*\;,
\ee
\noindent
which is the desired form \eqref{reddynamics} of the dynamical evolution.

At this stage, we can further specify $F^\mu$ in the gauge-fixing conditions \eqref{gfconditions} so as to get the standard Hamiltonian for the fields on the reduced phase space.

Let us then consider the case in which the functions $F^\mu$ do not depend on the fields $\varphi^A$ but only on $\lambda$, say $\chi^\mu=\tau^\mu-F^\mu(\lambda)$\footnote{A specific realization of this situation is provided for instance by the gauge choice $F^\mu=x^\mu$ in which the functions $F^\mu$ depend only on $\lambda$ through the spacetime coordinates $x^\mu$.}. According to the expressions \eqref{cantransfgf}, in this case the canonical variables on $\overline\Upsilon$ are just given by the physical fields and their conjugate momenta, i.e.:
\be
\bar\varphi^A=\varphi^A\qquad,\qquad\bar\Pi_A=\Pi_A\;.
\ee
\noindent
The constraints $\mathcal H_\mu=0$ can be then solved to express the momenta $P_\mu=\bar P_\mu$ in terms of the variables $(\bar\varphi^A,\bar\Pi_A)$ yielding $P_\mu=-\mathcal H^{(\varphi)}_\mu$. The new Hamiltonian \eqref{newhamgf} then reads
\be
\bar H=H-\int_\Sigma\dd^nx_0\,\dot F^\mu \bar P_\mu\;,
\ee
\noindent
from which, restricting on $\overline\Upsilon$ and taking into account that $H=0$ imposing the constraints, it follows that 
\be\label{newgfham}
\bar H=\int_\Sigma\dd^nx_0\,\dot F^\mu\mathcal H^{(\varphi)}_\mu\;.
\ee
\noindent
Note that $\dot F^0$ and $\dot{\vec F}$ respectively play the role of lapse and shift consistently with $\xi^\mu(\tau(\vec x))=\dot\tau^\mu$ being the deformation vector field of the foliation which after gauge-fixing yields $\dot\tau^\mu=\dot F^\mu$. 

In particular, by choosing the gauge $F^\mu(\lambda)=x^\mu$ with
\be\label{timegauge}
x^0=\lambda\qquad,\qquad x^k=0\quad k=1,\dots,n
\ee
\noindent
the new Hamiltonian \eqref{newgfham} is nothing but the usual field Hamiltonian
\be
\bar H=\int_\Sigma\dd^nx_0\,\mathcal H^{(\varphi)}_0=H^{(\varphi)}\;,
\ee
\noindent
which as such generates the correct dynamics for the field variables $(\varphi^A,\Pi_A)$.

Finally, via the maximum entropy principle, we define the Gibbs state on the reduced phase space $\overline\Upsilon$ by
\be
\rho_{b,F}^{(\text{eq})}(\varphi,\Pi)=\frac{1}{Z(b,F)}\,e^{-b\,\bar H(\varphi,\Pi)}=\frac{1}{Z(b,F)}\,\exp\left(-\int_\Sigma\dd^nx_0\,b\,\dot F^\mu\mathcal H^{(\varphi)}_\mu\right)\;,
\ee
\noindent
with
\be
Z(b,F)=\int_{\overline\Upsilon}\mathcal D[\varphi,\Pi]\,e^{-b\, \bar H(\varphi,\Pi)}\;.
\ee
\noindent
This is an equilibrium state w.r.t. the dynamical evolution generated by the Hamiltonian $\bar H$. In particular, choosing the temporal gauge $F^0=x^0$, $\vec F=\vec0$, we get the standard relativistic equilibrium state with local (as being a function of $\lambda$) temperature given by the inverse of the lapse. In the special case for $x_0=\lambda$, the lapse is one and the (inverse) temperature is globally identified with $b$.

%

\subsection{On the Thermodynamic Characterization of Covariant Equilibrium}\label{thermalinterpret}

Let us close this section by commenting on the relation between the generalized covariant Gibbs state for parametrized field theories presented in this work and the thermodynamic characterization of covariant statistical equilibrium arising in previous investigations. This not only provides us with a consistency check for our formalism, but also gives some insight on the physical interpretation of the Lie algebra-valued temperature in such a framework. In particular, we elaborate on its inbuilt observer dependence and its similarities as well as differences with the thermal time hypothesis.

The essence of thermal time \cite{40,41,42} relies on a fundamental reinterpretation of the relation between equilibrium states and time flow, according to which any statistical state $\rho$ is in equilibrium w.r.t. its own modular flow. The modular Hamiltonian $H=-\log\rho$ generates a one-parameter group of transformations, which defines the time flow associated to the state $\rho$. In particular, physical equilibrium states are those whose thermal time identifies a flow in spacetime \cite{43}, thus providing a thermodynamical characterization of the notion of time experienced by an observer \cite{40}. 

In our extended parametrised phase space description,  for any Lie algebra element $\xi$ the co-momentum map $H(\xi)$ defined in \eqref{equivnewham} can be thought of as the thermal modular Hamiltonian
\begin{equation}\label{covmodularham}
H(\xi)(\varphi, \Pi, \tau, P)=-\braket{\xi,\tilde{\mathcal J}_H(\varphi, \Pi, \tau, P)}=\log\rho_\xi(\varphi, \Pi, \tau, P)
\end{equation}
associated with the covariant state $\rho_{\xi}$. The corresponding thermal flow identifies a one-parameter group on $\Upsilon\equiv T^*\mathscr Y_\tau\times T^*\textsf{Emb}_G(\Sigma,\mathcal X)$ generated by the Hamiltonian vector field $X_H$ satisfying
\be
\rho_\xi\text{i}_{X_H}\tilde\omega=\dd\rho_\xi\qquad\text{or equivalently}\qquad \text{i}_{X_H}\tilde\omega=\dd\log\rho_\xi=\dd H(\xi)\;,
\ee
where $\tilde\omega$ is the symplectic structure on $\Upsilon$ given in \eqref{fieldsymplecticstructure} and, with a slight abuse of notation, we still denote by $\dd$ the exterior differential on the extended parametrized phase space of fields although it should not be confused with that on spacetime. More explicitly, as already anticipated in Sec. \ref{diffreprmmap}, the equations of motion associated to the Hamiltonian $H(\xi)$ are given by
\begin{align}
\dot\varphi^A(\vec x)&=\{\varphi^A(\vec x),H(\xi)\}=\int_\Sigma \dd^nx_0'\xi^\nu(\tau(\vec x\,'))\{\varphi^A(\vec x), \mathcal H_\nu^{(\varphi)}(\vec x\,')\}=\xi^\nu(\tau(\vec x))\frac{\delta\mathcal H_\nu^{(\varphi)}}{\delta\Pi_A}\;,\label{eom1}\\
\dot\Pi_A(\vec x)&=\{\Pi_A(\vec x),H(\xi)\}=\int_\Sigma \dd^nx_0'\xi^\nu(\tau(\vec x\,'))\{\Pi_A(\vec x), \mathcal H_\nu^{(\varphi)}(\vec x\,')\}=-\xi^\nu(\tau(\vec x))\frac{\delta\mathcal H_\nu^{(\varphi)}}{\delta\varphi^A}\;,\label{eom2}\\
\dot\tau^\mu(\vec x)&=\{\tau^\mu(\vec x),H(\xi)\}=\int_\Sigma \dd^nx_0'\xi^\nu(\tau(\vec x\,'))\{\tau^\mu(\vec x), P_\nu(\vec x\,')\}=\xi^\mu(\tau(\vec x))\;,\label{eom3}\\
\dot P_\mu(\vec x)&=\{P_\mu(\vec x),H(\xi)\}\approx\int_\Sigma \dd^nx_0'\xi^\nu(\tau(\vec x\,'))\{P_\mu(\vec x), P_\nu(\vec x\,')\}=0\label{eom4}\;,
\end{align}
so we see that the vector field
\begin{align}
X_H&=\dot\varphi^A\frac{\delta}{\delta\varphi^A}+\dot\tau^\mu\frac{\delta}{\delta\tau^\mu}+\dot\Pi_A\frac{\delta}{\delta\Pi_A}+\dot P_\mu\frac{\delta}{\delta P_\mu}\nonumber\\
&=\xi^\mu(\tau(\vec x))\left[\left(\frac{\delta\mathcal H_\mu^{(\varphi)}}{\delta\Pi_A}\right)\frac{\delta}{\delta\varphi^A}-\left(\frac{\delta\mathcal H_\mu^{(\varphi)}}{\delta\varphi^A}\right)\frac{\delta}{\delta\Pi_A}+\frac{\delta}{\delta\tau^\mu}\right]\;,
\end{align}
generates the correct dynamical evolution for the fields as well as for the constraints (see Eq. \eqref{constraintflow}). 

In particular, looking at $\Upsilon$ as a bundle over $\textsf{Emb}(\Sigma,\mathcal X)$, whose fiber over $\tau\in\textsf{Emb}(\Sigma,\mathcal X)$ is the field phase space $T^*\mathscr{Y}_\tau$, coordinatized by the spatial matter fields and their conjugate momenta, we see from \eqref{eom3} that the above vector field restricted to $\textsf{Emb}(\Sigma,\mathcal X)$ generates a one-parameter curve of embeddings $c:\mathbb R\to\textsf{Emb}(\Sigma,\mathcal X)$ by $c(\lambda)=\tau(\lambda)$, which in turn identifies a slicing $\mathfrak{s}_\mathcal X(\vec x,\lambda)=\tau(\lambda)(\vec x)$ in spacetime generated by the vector field $\xi_\mathcal X(\tau(\vec x))$. The corresponding curve $c_\tau(\lambda)=(\varphi(\lambda),\Pi(\lambda))$ in $T^*\mathscr{Y}_\tau$ identifies then a compatible $\mathcal G$-slicing of the field space. 

As schematically pictured in Fig.\ref{thflow}, the thermal flow associated to the state $\rho_{\xi}$ defined in \eqref{covmodularham} with thermal time parameter $\lambda$ defines a one-parameter group of bundle automorphisms as well as a compatible slicing of spacetime so that the Lie algebra-valued temperature $\xi$ identifies a direction in spacetime along which geometry and matter fields evolve. 

\begin{figure}[t!]
\begin{center}
\includegraphics[scale=0.42]{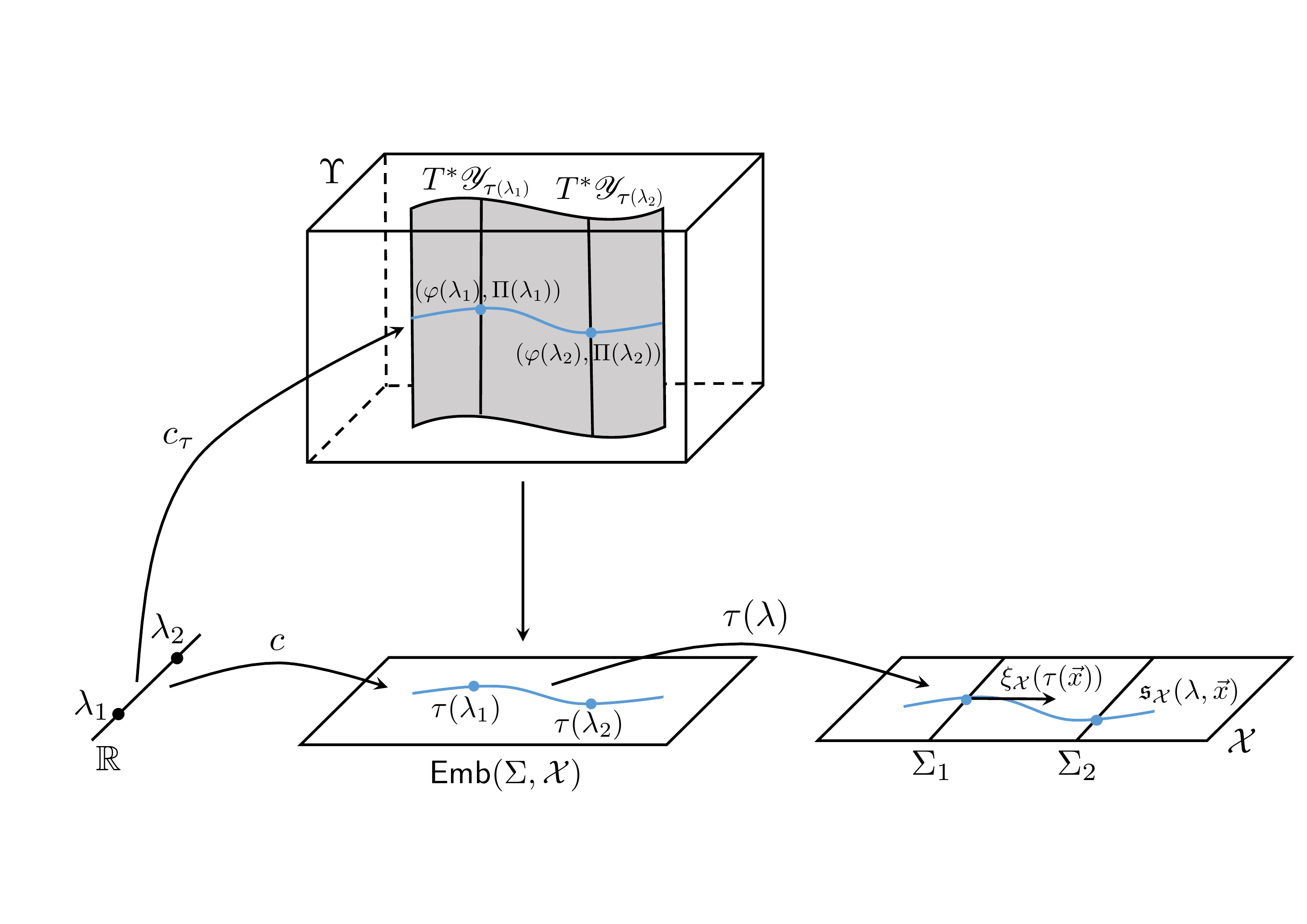}\caption{Slicing of spacetime and of the field bundles over it generated by the thermal flow associated to the Gibbs state \eqref{covarianteqstate}. The Lie algebra-valued temperature $\xi$ thus defines a generalized notion of direction of dynamical evolution characterized by a ``thermal time parameter'' $\lambda$.}
\label{thflow}
\end{center}
\end{figure}

Now, in our covariant statistical mechanic approach, the notion of covariant modular equilibrium associated to $\rho_{\xi}$ is related with the notion of thermodynamic equilibrium, via the Gibbs state derived in \eqref{covarianteqstate}, modulo the introduction of a global temperature rescaling $b$. 
In this sense, the Lie algebra element $\xi$ provides a \emph{local} notion of temperature.
Indeed, along the line of \cite{43}, we can define a local temperature function, i.e. a map $T: \Upsilon\times\Sigma\mapsto\mathbb R$, by
\be
T((\varphi,\Pi,\tau,P),\vec x)=T(\varphi(\vec x),\Pi(\vec x),\tau(\vec x),P(\vec x)):=|\xi(\tau(\vec x))|^{-1}\;,
\ee
with the vector field $\xi^{\mu}(\tau(\vec x))$ playing the role of \textit{multi-fingered time}. Consistently, with the gauge choice $F^0=x^0$ discussed in the previous section, the above temperature function reduces on-shell to the inverse of the lapse function, say $T=\frac{1}{\dot{x}^0}$, thus locally yielding the rate of change of the thermal time parameter $\lambda$ and the coordinate time $x^0$.

In our setting, the inbuilt observer dependence of the thermal time notion of equilibrium is made explicit.
In fact, the Lie algebra element $\xi$ identifies a foliation, and the associated one-parameter flow in spacetime characterizes an evolution flow associated with the corresponding canonical observer. Changing elements in the Lie algebra then corresponds to a change of observer frame, hence a different foliation identified by the corresponding one-parameter subgroup of diffeomorphisms. 

The main difference in comparing the above analysis with the framework of thermal time hypothesis is that the generalized Gibbs state \eqref{covarianteqstate} is defined on the parametrized canonical phase space and not on the space of solutions of the field equations. Only after introducing suitable gauge choices the thermal flow associated with the gauge-fixed Hamiltonian generates the on-shell dynamics for the matter fields (see Sec. \ref{gaugefixedeqstate}). In this sense, our framework can be thought of as an off-shell generalization of the thermal time setting, with the embedding dependence of the Lie algebra valued temperature playing a fundamental role in connecting the phase space and spacetime pictures.

%

\section{Conclusions and Outlook}\label{conclusions}

In this work, we consider the problem of defining statistical mechanics and thermodynamics for generally covariant field theories, where both dynamics and gauge symmetry are encoded in (first-class) constraints, hence no standard notions of time and energy are available. The key issue consists in the definition of statistical equilibrium, beyond time translations, for a symmetry group flow given by spacetime diffeomorphisms. Souriau's symplectic and covariant reformulation of statistical equilibrium for Hamiltonian Lie group actions provides a useful framework to this aim.\\
A multi-symplectic generalisation of Souriau's Lie group thermodynamics to the case of a fully constrained system suggests a definition of a Gibbs-like state with respect to the gauge group action generated by the first-class constraints. In this setting, the Hamiltonian Lie group action is characterized by the vanishing of the associated momentum map. The vanishing momentum map is equivalent to a first class constraint, reflecting the presence of a gauge symmetry for the system, and the constraint surface is identified with the zero-level set of the associated momentum map \cite{Henneaux, gimmsyIII, Giulini}. \\
However, in general-covariant theories the super-Hamiltonian and super-momentum constraints do not close a genuine Lie algebra which reflects into the non-equivariance of the energy-momentum map preventing a straightforward representation of the algebra of the spacetime diffeomorphism group. As a consequence, the application of Jaynes' entropy maximization principle \cite{Jaynes} with constant mean values of the first-class constraints leads to a Gibbs-like state which, due to the non-equivariance of the energy-momentum map, is not an equilibrium state under the gauge flow generated by the constraints.\footnote{In terms of the classification of different kinds of Gibbs states given in \cite{GFTgibbs}, this state would correspond to a \textit{thermodynamical} rather than \textit{dynamical} statistical state.} In particular, such a state is not of equilibrium with respect to a one-parameter group of spacetime diffeomorphisms, but only with respect to spatial diffeomorphisms, for which there is a momentum map induced by the energy-momentum map restricted to the Lie algebra of spatial diffeomorphisms (cfr. Eqs. \eqref{spacediffmommap} and \eqref{spatialmmap}).\\
We overcome this issue by \emph{extending} the multi-symplectic phase space description via the  introduction of the covariance fields. This allows us to recover the action of the spacetime diffeomorphism group by means of the observable Hamiltonian functional $H(\xi)$ defined in \eqref{equivnewham} which identifies an equivariant momentum map, homomorphically relating the diffeomorphism algebra to the Poisson bracket algebra of functionals on the parametrized phase space. The derivation of a Gibbs state associated to this equivariant momentum map via the prescriptions of Lie group thermodynamics eventually provides us with a \textit{dynamical} (in the sense of \cite{GFTgibbs}) equilibrium state on the parametrized phase space. Such a state is of equilibrium with respect to the one-parameter group of diffeomorphisms generated by the vector field $\xi_\mathcal X$ associated to $\xi\in\mathfrak g=\textsf{diff}(\mathcal X)$. In this sense, it defines a spacetime covariant notion of thermodynamical equilibrium.   

The gauge character of spacetime diffeomorphisms implies a radical conceptual shift in the definition of equilibrium state with respect to Souriau's work. By replacing a dynamical symmetry with a gauge one, we move our analysis from the fully reduced symplectic space of motions (on-shell) to the unconstrained extended phase space of the system (off-shell). While being defined off-shell, the covariant Gibbs state is by construction an observable of the theory, and it encodes, via the covariant momentum map functional, all the dynamical information carried by the given parametrized field theory: its canonical Hamiltonian, its initial value constraints, its gauge freedom, and its stress energy-momentum tensor \cite{gimmsyI}. Therefore, we expect the off-shell equilibrium to play a role similar to a generating functional partition function in field theory, with a Gibbs state corresponding to a ``soft'' imposition of the constraints of the theory. 

Indeed, for $\tilde{\mathcal J}_H(\varphi,\Pi,\tau,P) \to 0$, the Gibbs distribution converges to the uniform distribution defined over the reduced support with vanishing (though not necessarily minimal) momentum map, 
namely
\be
\rho^{\text{(eq)}}_b \rightarrow \delta(\bar{\sigma})
\ee
with $\bar{\sigma} \in \tilde{\mathcal J}_H^{-1}(0) \subset \tilde{\mathscr{P}}_\tau$, corresponding to field configurations satisfying both primary and secondary first class constraints. From the second Noether theorem, we know that field configurations $\bar{\sigma}$ such that ${\tilde{\mathcal J}_H}(\bar\sigma)=0$, represent \emph{solutions} to the Euler-Lagrange equations \cite{gimmsyI}. In this sense, we can think of reformulating the symplectic reduction problem for our first order parametrized field theory in thermodynamic terms, as a limiting case of a general principle of maximum entropy\footnote{For $b$ such that $\langle Q(b),b\rangle >0$, minimizing the free energy is the same as maximizing $S$.}  \cite{Jaynes}, and in accordance with the \emph{geometrodynamics regained} program of Kucha\v r \cite{regain}. To the idea that the geometry and symmetry together determine the field theory, the proposed derivation adds a further statistical characterisation, showing that from symmetry assumptions and the energy-momentum map it is possible to recover a sharper statistical information on the solutions of the Euler-Lagrange equations in some thermodynamic limit.

It is natural to understand the generalized Gibbs state in \eqref{covarianteqstate} as an \emph{off-shell} generalisation corresponding to a canonical statistical distribution in which a non-zero weight is assigned to configurations which do not solve the constraint equations. Interestingly, the passage from the canonical description to the microcanonical definition of the constrained theory space amounts to \emph{coarse-graining} the information encoded in the inverse  temperature vector in the algebra.\footnote{Indeed, one can see the uniform distribution partition function as a weighted superposition of canonical partition functions over various values of the intensive $b$ parameter.} Such a coarse-graining essentially erases the information on any specific choice of reference frame, corresponding to a sharper notion of space-time covariance. Choosing not to erase such information allows to see how an eventual deparametrisation or a gauge fixing, possibly physically induced, associates a notion of time to thermal equilibrium.


As our original motivation is general relativity, the present analysis needs to be extended to gravity for which a covariant Hamiltonian formalism has been extensively studied in the last years, see e.g. \cite{Rovelli2,50,c1,c1.2,c1.3,c1.4} and references within. However, the intrinsic parametrized nature of gravity \cite{c2} makes the achievement of a phase space representation of spacetime diffeomorphisms a non-trivial task. Useful insights in this respect may be found in the work by Isham and Kucha\v{r} \cite{ishamkuchar1}. A certainly interesting closely related aspect to explore consists in including boundaries in our covariant description (see e.g. \cite{c4a,c4b,c4c} for the analysis of boundaries in the framework of multi-symplectic field theories). The inclusion of boundaries would be needed to identify a notion of subsystems which is crucial in any thermodynamical analysis. As remarked above, in this covariant setting the constraints are beautifully encoded in the momentum map associated with the gauge group action on the field space. The latter plays a key role in the construction of the covariant Gibbs state. As shown in \cite{c5}, in presence of boundaries, the kinematical constraint algebra can be written as conservation laws for boundary charges. These charges would then enter the boundary momentum map out of which a statistical mechanic framework for the on-shell boundary modes can be constructed. This may open fruitful connections with recent work on boundary modes in quantum gravity \cite{c6}. Of particular interest would be the case in which the finite boundaries describe horizons, with potential application to black hole thermodynamics.\\
From a broader point of view, the application of our covariant statistical mechanics formalism to discrete gravity models may provide interesting insights to the study of coarse-graining approaches and the continuum limit from a thermodynamical perspective, as initiated for instance in \cite{c7,c8}. 


\section*{Acknowledgements}
The authors would like to thank Daniele Oriti and Isha Kotecha for discussions in the very early stages of the project. ML and FM thank AEI Potsdam for hospitality in the beginning of this work. The authors are also grateful to Florio M. Ciaglia and Fabio Di Cosmo for discussions and comments on an early version of the draft. FM is indebted with Johannes M\"unch for valuable discussions throughout this work. The work of FM at the University of Regensburg was supported by an International Junior Research Group grant of the Elite Network of Bavaria.

\appendix

\section{Equilibrium for Spacetime vs. Spatial Diffeomorphisms}\label{spatialdiffequilibrium}

As a consistency check for our formalism, it is interesting to investigate how the notion of the generalized equilibrium state changes when defined in terms of the one-parameter subgroups of full \emph{spacetime} and \emph{spatial} diffeomorphisms.

As discussed in Sec. \ref{diffreprmmap}, the action of the diffeomorphism group on the covariance fields induces a natural action of $\textsf{Diff}(\mathcal X)$ on the space of embeddings by left action \eqref{GactionEmb}. The natural action of the group $\textsf{Diff}(\Sigma)$ of \emph{spatial} diffeomorphisms on the space of embeddings is instead the right action, namely
\be\label{spatialdiffaction}
\textsf{Emb}(\Sigma,\mathcal X)\times\textsf{Diff}(\Sigma)\longrightarrow\textsf{Emb}(\Sigma,\mathcal X)\qquad\text{by}\qquad(\tau,\alpha_\Sigma)\longmapsto\tau\comp\alpha_\Sigma\;.
\ee
Accordingly, a point $\vec x\in\Sigma$ is mapped by $\alpha_\Sigma\in\textsf{Diff}(\Sigma)$ into a new point $\vec x'=\alpha_\Sigma(\vec x)\in\Sigma$, and then this point is mapped by the embedding $\tau$ into the spacetime point $\tau(\vec x')\in\mathcal X$.

The right action  of $\textsf{Diff}(\Sigma)$ \eqref{spatialdiffaction} induces a left action of $\textsf{Diff}(\mathcal X)$ on a given hypersurface $\tau(\Sigma)$ which preserves the hypersurface fixed, i.e., for a given $\tau\in\textsf{Emb}(\Sigma,\mathcal X)$, we have
\be\label{stdiffsdiff}
\tau\comp\alpha_\Sigma=\Phi_{\alpha_\Sigma}\comp\tau\;,
\ee
such that the following diagram
\be
\xymatrix{
\mathcal X\ar[r]^-{\Phi_{\alpha_{\Sigma}}} & \mathcal X\\
\Sigma\ar[u]^-{\tau}\ar[r]_-{\alpha_{\Sigma}} & \Sigma\ar[u]_-{\tau}
}
\ee
commutes. In particular, considering a one-parameter subgroup of $\textsf{Diff}(\Sigma)$ which induces a one-parameter subgroup of hypersurface preserving spacetime diffeomorphisms, the corresponding generating vector fields $\xi_\Sigma\in\mathfrak{X}(\Sigma)$ and $\xi_\mathcal X\in\mathfrak X(\mathcal X)$ are related by
\be\label{genvecfields}
\xi_\mathcal X=\tau_*\xi_\Sigma\qquad\text{i.e.}\qquad \xi_\mathcal X^\mu(\tau(\vec x))=\tau^\mu_{,k}(\vec x)\xi_\Sigma^k(\vec x)\;.
\ee
Differently from the fully spacetime covariant case, the embedding $\tau$ is now fixed, thus restricting the analysis to a given spatial hypersurface $\Sigma_\tau=\tau(\Sigma)$. Correspondingly, the restriction of the covariant momentum map $\tilde{\mathcal J}_H$ to the subalgebra $\mathfrak g_\tau=\textsf{diff}_{\Sigma_\tau}(\mathcal X)$ of spatial diffeomorphisms which preserves the image of $\tau$, identifies an equivariant momentum map w.r.t. the action of the subgroup $\mathcal G_\tau=\textsf{Diff}(\Sigma_\tau)$. Indeed, using the relation \eqref{genvecfields}, for $b_\tau\in\mathfrak g_\tau=\textsf{diff}_{\Sigma_\tau}(\mathcal X)$ we have
\begin{align}\label{spatialmmap}
\braket{\tilde{\mathcal J}_H(\varphi,\Pi,\tau,P),b_\tau}&=-\int_{\Sigma_\tau}\dd^nx_0\,\xi^\mu_{(b_\tau)}(\tau(\vec x))\mathcal H_\mu(\vec x)\nonumber\\
&=-\int_{\Sigma_\tau}\dd^nx_0\,\tau^\mu_{,k}(\vec x)\zeta^k_{(b_\tau)}(\vec x)\mathcal H_\mu(\vec x)\nonumber\\
&=-\int_{\Sigma_\tau}\dd^nx_0\,\zeta^k_{(b_\tau)}(\vec x)\mathcal H_k(\vec x)\nonumber\\
&=-(\vec P(\zeta)+\vec H^{(\varphi)}(\zeta))\nonumber\\
&=:\braket{\mathcal J_\tau(\varphi,\Pi,\tau,P),\zeta}\quad,\quad\zeta\in\textsf{diff}(\Sigma_\tau)
\end{align}

\noindent
so that, for $\zeta,\zeta'\in\textsf{diff}(\Sigma_\tau)$, we have

\be
\{\mathcal J_\tau(\zeta),\mathcal J_\tau(\zeta')\}=\{\vec H(\zeta),\vec H(\zeta')\}=\vec H([\zeta,\zeta'])\;,
\ee 

\noindent
as can be checked by direct computation using the fact that $\zeta^j(\vec x), \zeta'^k(\vec x')$ do not depend on the embeddings and the Poisson bracket for the spatial diffeomorphism constraint. Thus, the map $\mathcal J_\tau=\tilde{\mathcal J}_H\bigl|_{\Sigma_\tau}$ provides a representation of the algebra of spatial diffeomorphisms on the extended phase space. Consistently, the map \eqref{spatialmmap} gives the expected lift of the shift vector into the deformation vector tangential to the spatial slice, i.e.
\begin{align}
\dot\tau^\mu(\vec x)&=\{\tau^\mu(\vec x), \mathcal J_\tau(\zeta)\}=\{\tau^\mu(\vec x),\vec P(\zeta)\}\nonumber\\
&=\int_{\Sigma_\tau}\dd^nx'_0\,\zeta^k(\vec x')\tau^\nu_{,k}(\vec x')\{\tau^\mu(\vec x),P_\nu(\vec x')\}\nonumber\\
&=\tau^\mu_{,k}(\vec x)\zeta^k(\vec x)\;,
\end{align}
and the momentum map $\mathcal J_\tau=\tilde{\mathcal J}_H\bigl|_{\Sigma_\tau}$ restricted to the subalgebra of hyper-surface preserving diffeomorphisms coincides with the usual $\textsf{diff}(\Sigma_\tau)$-equivariant momentum map \eqref{spacediffmommap} with $\zeta^k(\vec x)$ playing the role of the shift vector. In other words, being the embedding fixed, we are now ``constraining'' the lifted action of the diffeomorphism group on the space of embeddings in such a way that the induced spacetime diffeomorphism preserves the spatial hypersurface thus yielding an equivariant momentum map under spatial diffeomorphisms only and not under the full spacetime diffeomorphisms group.

Accordingly, restricting ourselves to the subgroup of spatial diffeomorphisms, the covariant Gibbs state \eqref{covarianteqstate} reduces to
\begin{align} \label{spatialdiffeqstate}
\rho^{\text{(eq)}}_{b_\tau}(\varphi,\Pi,\tau,P)&=\frac{1}{Z(b_\tau)}\exp{\Bigl(-\braket{b_\tau,\tilde{\mathcal J}_H(\varphi,\Pi,\tau,P)}\Bigr)}\nonumber\\
&=\frac{1}{Z(\zeta)}\exp{\Bigl(-\braket{\zeta,\mathcal J_\tau(\varphi,\Pi,\tau,P)}\Bigr)}=:\rho^{\text{(eq)}}_{\zeta}(\varphi,\Pi,\tau,P)\ ,
\end{align}
with
\begin{align}
Z(b_\tau)&=\int_\Upsilon\mathcal D[\varphi,\Pi,\tau,P]\exp{\Bigl(-\braket{b_\tau,\tilde{\mathcal J}_H(\varphi,\Pi,\tau,P)}\Bigr)}\nonumber\\
&=\int_\Upsilon\mathcal D[\varphi,\Pi,\tau,P]\exp{\Bigl(-\braket{\zeta,\mathcal J_\tau(\varphi,\Pi,\tau,P)}\Bigr)}=Z(\zeta)\;,
\end{align}
where $\zeta$ is now restricted to the subset  $\Omega_\tau \subset \mathfrak{g}_\tau=\textsf{diff}(\Sigma_\tau)$ such that the above integral \emph{converge}. Due to the equivariance of the momentum map $\mathcal J_\tau$ under the action \eqref{stdiffsdiff}, the Gibbs statistical state \eqref{spatialdiffeqstate} is then an equilibrium state w.r.t. the one-parameter family of spatial diffeomorphisms generated by the vector field \eqref{genvecfields} associated to $\zeta\in\textsf{diff}(\Sigma_\tau)$. 

\begin{remark}
Note that, since in this reduced setting the embedding dependence of the vector field $\xi_{\mathcal{X}}(\tau(\vec{x}))$ given in \eqref{genvecfields} is reabsorbed into the projected constraint densities (cfr. Eq. \eqref{spatialmmap}), the state in \eqref{spatialdiffeqstate} can be derived directly via the prescriptions of Lie Group Thermodynamics with a $\textsf{diff}(\Sigma_{\tau})$-valued Lagrange multiplier entering the maximum entropy principle and playing the role of local temperature.
\end{remark}

\bibliographystyle{JHEP}

\begin{thebibliography}{99}

\bibitem{36}  J. M. Maldacena, Int. J. Theor. Phys. 38 (1999); M.Van Raamsdonk, Int. J. Mod. Phys.D19,2429(2010), arXiv:1005.3035; B. Swingle and M. Van Raamsdonk, arXiv:1405.2933 [hep- th]; S.Ryu, andT. Takayanagi, Aspects of Holographic Entanglement Entropy, JHEP 0608:045, arXiv:hep-th/0605073, (2006). 

\bibitem{35}  T. Thiemann, Modern canonical quantum general relativity, Cambridge University Press (2008); Rovelli, Carlo, Quantum gravity Cambridge, UK: Univ. Pr. (2004); Oriti, Daniele, Class. Quant. Grav. 33 (2016), arXiv:1310.7786; Ambjorn, J. and Gorlich, A. and Jurkiewicz, J. and Loll, R., Phys. Rev. D 92 (2015), arXiv:1504.01065. 



\bibitem{40}  C. Rovelli, Statistical mechanics of gravity and the thermodynamical origin of time, Class. and Quant. Grav. 10, 1549 (1993); 

\bibitem{39} C. J. Isham, Prima facie questions in quantum gravity, Imperial/TP/93-94/1; C. Rovelli, p. 126 in Conceptual Problems of Quantum Gravity ed. A. Ashtekar and J. Stachel, (Birkhauser, Boston,1991); C. Rovelli, Time in quantum gravity: An hypothesis, Phys. Rev. D 43 (1991) 442. 

\bibitem{41} A. Connes, C. Rovelli, Von Neumann Algebra Automorphisms and Time-Thermodynamics Relation in General Covariant Quantum Theories, Class. Quant. Grav. 11, 2899 - 2918 (1994).

\bibitem{42}  R. Haag, Local Quantum Physics, Springer Verlag, Berlin 1992.

\bibitem{43} C. Rovelli, General Relativistic Statistical Mechanics, Phys. Rev. D 87, 084055 (2013), arXiv:1209.0065. 


\bibitem{43b}Hyun Seok Yang, Emergent Spacetime and The Origin of Gravity, JHEP 0905:012, 2009, arXiv:0809.4728.

\bibitem{43c}T. P. Singh, Int. J. Mod. Phys. D 15, 2153-2158 (2006), arXiv:hep-th/0605112.


\bibitem{43e}M. Montesinos, C. Rovelli, Class. Quant. Grav. 18 (2001) 555-569, arXiv:gr-qc/0002024.

\bibitem{ch} G. Chirco, H. M. Haggard, C. Rovelli, Coupling and thermal equilibrium in general-covariant systems. Physical Review D 88, 084027 (2013), arXiv:1309.0777 [gr-qc].

\bibitem{rsm} C. Rovelli, M. Smerlak, Thermal time and the Tolman-Ehrenfest effect: Temperature as the ``speed of time''. Class. Quantum Gravity 2011, 28, 075007.

\bibitem{ko} I. Kotecha, D. Oriti, Statistical Equilibrium in Quantum Gravity: Gibbs states in Group Field Theory. New J. Phys. 2018, 20, 073009. 

\bibitem{cko} G. Chirco, I. Kotecha, D. Oriti, Statistical equilibrium of tetrahedra from maximum entropy principle. Phys. Rev. D 99, 086011 (2019). 

\bibitem{ck} G. Chirco, I. Kotecha, Generalized Gibbs Ensembles in Discrete Quantum Gravity. In Geometric Science of Information 2019; Nielsen, F., Barbaresco, F., Eds.; Springer: Cham, Switzerland, 2019. 

\bibitem{hr} H. M. Haggard, C. Rovelli, Death and resurrection of the zeroth principle of thermodynamics. Phys. Rev. D 87, 084001 (2013).

\bibitem{19}G. Chirco, T. Josset, C. Rovelli, Statistical mechanics of reparametrization-invariant systems. It takes three to tango. Class. Quantum Gravity 2016, 33, 045005.

\bibitem{19a} G. Chirco, T. Josset, Statistical mechanics of covariant systems with multi-fingered time,  arXiv:1606.04444. 

\bibitem{kotecha} I. Kotecha, Thermal Quantum Spacetime, Universe 2019, 5(8), 187.

\bibitem{Jaynes} E. T. Jaynes, \textit{Information theory and statistical mechanics\,I, II}, Phys. Rev. \textbf{106}, 620 (1957).

\bibitem{gimmsyI}M. J. Gotay, J. Isenberg, J. E. Marsden, and R. Montgomery, \textit{Momentum Maps and Classical Relativistic Fields. Part I: Covariant Field Theory}, arXiv:physics/9801019 [math-ph].

\bibitem{gimmsyII}M. J. Gotay, J. Isenberg, J. E. Marsden, and R. Montgomery, \textit{Momentum Maps and Classical Relativistic Fields. Part  II: Canonical Analysis of Field Theories}, arXiv:physics/9801019 [math-ph].

\bibitem{gimmsyIII}M. J. Gotay, J. Isenberg, J. E. Marsden, and R. Montgomery, \textit{Momentum Maps and Classical Relativistic Fields. Part III: Gauge Symmetries and Initial Value Constraints}, $http://www.pims.math.ca/\sim gotay/GiMmsy\, III.pdf$

\bibitem{45}  Abraham, R.; Marsden, J.E. Foundations of Mechanics, 2nd ed.; American Chemical Society: Washington, DC, USA, 1978.

\bibitem{46}  E. Buffenoir, M. Henneaux, K. Noui and Ph. Roche, Hamiltonian analysis of Plebanski theory, Class. Quant. Grav. 21, 5203 (2004) [arXiv:gr-qc/0404041]. M. Henneaux, C. Teitelboim, Quantization of Gauge Systems, Princeton University Press (1992). 

\bibitem{Giulini} D. Giulini, \textit{Dynamical and Hamiltonian formulation of General Relativity}, arXiv:1505.01403 [gr-qc], C.Rovelli, arXiv:gr-qc/0202079v1.

\bibitem{47}  F. H\'elein, J. Kouneiher, J.Math.Phys. 43 (2002) 2306-2347; G. Esposito, C. Stornaiolo, G. Gionti, Nuovo Cim. B 110:1137-1152 (1995).

\bibitem{48}  J. M. Souriau, Structure des Syst\`emes Dynamiques, Dunod: Malakoff, France, 1969.

\bibitem{49}  J.F. Cari\~nena, M. Crampin, L.A. Ibort, On the multisymplectic formalism for first order field theories, Differential Geometry and its Applications, Vol. 1, Issue 4, 345-374, (1991).

\bibitem{50} A. E. Fischer and J. E. Marsden, The initial value problem and the dynamical formulation of general relativity, in ``General relativity: An Einstein's Centenary Survey'' (S. W. Hawking and W. Israel, Eds.) Cambridge Univ. Press, London, 138-211, (1979). 

\bibitem{ADM} R. Arnowitt, S. Deser, and C. W. Misner, in ``Gravitation: An Introduction to Current Research'' (L. Witten, Ed.), Wiley New York, 1962;  R. Arnowitt, S. Deser, and C. W. Misner, Republication of: The dynamics of general relativity, Gen. Relativ. Gravit. 40 (2008) 1997-2027, arXiv:gr-qc/0405109. 

\bibitem{Souriau_b}J.-M. Souriau, Structure des Syst\`emes Dynamiques, ; Dunod: Malakoff, France (1969).  

\bibitem{Souriau}J.-M. Souriau, Definition covariante des \'equilibres thermodynamiques, Supplemento al Nuovo cimento vol. IV n.1, 1966, p. 203-216.

\bibitem{Marle} C.-M. Marle, From Tools in Symplectic and Poisson Geometry to J.-M. Souriau's Theories of Statistical Mechanics and Thermodynamics, Entropy 2016, 18(10), 370.


\bibitem{torrecovphasespace} C. G. Torre, \textit{Covariant Phase Space Formulation of Parametrized Field Theories}, J.Math.Phys. 33 (1992).

\bibitem{lopezgotaymarsdenPFT} Marco Castrill\'on Lopez, Mark J. Gotay, Jerrold E. Marsden, \textit{Parametrization and Stress-Energy-Momentum Tensors in Metric Field Theories}, J. Phys. A 41:344002, (2008).

\bibitem{lopezgotayPFT} Marco Castrill\'on Lopez, Mark J. Gotay, \textit{Covariantizing Classical Field Theories}, The Journal of Geometric Mechanics 3(4), (2010).

\bibitem{ishamkuchar1} C. J. Isham and K. V. Kucha\v{r}, \textit{Representations of Spacetime Diffeomorphisms. I. Canonical Parametrized Field Theories}, Ann. Phys. \textbf{164}, 288-315, (1985). 
%

\bibitem{Helein0} F. H\'elein and J. Kouneiher, \textit{Finite dimensional Hamiltonian formalism for gauge and quantum field theories}, Journal of Mathematical Physics 43, 2306, (2002).

\bibitem{helein1} F. H\'elein and J. Kouneiher, \textit{Covariant Hamiltonian formalisms for the calculus of variations with several variables}, Adv.Theor.Math.Phys. 8 (2004) 565-601, arXiv:math-ph/0211046.

\bibitem{Sardanashvily} G. Sardanashvily, \textit{Multimomentum Hamiltonian formalism in field theory}, arXiv:hep-th/9403172, (1994).

\bibitem{Ibort1} J. F. Cari$\tilde{\text{n}}$ena, M. Crampin and L. A. Ibort, \textit{On the multisymplectic formalism for first order field theories}, Differential Geometry and its Applications 1, 345-374, (1991).

\bibitem{echeverria-enriquez} A. Echeverr\'ia-Enr\'iquez, M. C. Mu$\tilde{\text{n}}$oz-Lecanda, N. Rom\'an-Roy, \textit{Multivector Field Formulation of Hamiltonian Field Theories: Equations and Symmetries}, J. Phys. A 32(48), 8461-8484, (1999).

\bibitem{Rovelli1} C. Rovelli,\textit{A note on the foundation of relativistic mechanics. I: Relativistic observables and relativistic states}, arXiv:gr-qc/0111037, (2002).

\bibitem{Rovelli2} C. Rovelli, \textit{A note on the foundation of relativistic mechanics II: Covariant hamiltonian general relativity}, arXiv:gr-qc/0202079 (2002).


%
%
%
%

%
%
%
%
%
%
%
%


%
%
%
%
%
%
%


\bibitem{RovelliPO} C. Rovelli, \textit{Partial observables}, Phys. Rev. D 65:124013, (2002), arXiv:gr-qc/0110035.

\bibitem{saunders} D. J. Saunders, \textit{The geometry of jet bundles}, London Mathematical Society Lecture Note Series, 142, Cambridge University Press, Cambridge, (1989).

\bibitem{ham} R. Abraham and J. E. Marsden, \textit{Foundations of Mechanics}, 2nd ed.; American Chemical Society: Washington, DC, USA, 1978.

\bibitem{marsdenratiu} J. E. Marsden and T. Ratiu, \textit{Introduction to Mechanics and Symmetry, A Basic Exposition of Classical Mechanical Systems}, Texts in Applied Mathematics 17, Springer-Verlag New York, 1999.

\bibitem{helein2} F. H\'elein and J. Kouneiher, \textit{Covariant Hamiltonian formalism for the calculus of variations with several variables: Lepage-Dedecker versus de Donder-Weyl}, Adv. Theor. Math. Phys. 8, 575-611, (2004).

\bibitem{Diracalgebra1}P. A. M. Dirac, \textit{The Hamiltonian form of field dynamics}, Can. J. Math. \textbf{3}, 1-23, (1951).

\bibitem{kucharcanonquantgrav} K. V. Kucha\v{r}, \textit{Canonical quantization of gravity}. In: Israel, W., editor, \textit{Relativity, Astrophysics and Cosmology}, p. 237-288, Reidel, Dordrecht, (1973).

\bibitem{kucharcanonquantgencovsystems} K. V. Kucha\v{r}, \textit{Canonical quantization of generally covariant systems}. In:  B. R. Iyer, A. Kembhavi, J. V. Narlikar and C. V. Vishveshwara, editors, \textit{Highlights in Gravitation and Cosmology: Proceedings of the International Conference}, p. 93-120, Cambridge Univ. Press, Cambridge, (1988).

\bibitem{Anderson} J. L. Anderson, \textit{Principles of Relativity Physics}, Academic Press, New York, (1967).

\bibitem{Gui} V. Guillemin and S. Sternberg. {\it Symplectic techniques in Physics}, Cambridge University Press, 1993.

\bibitem{ishamsymplgeom} P. Hajicek, C.J. Isham, \textit{The symplectic geometry of a parametrized scalar field on a curved background}, J. Math. Phys. 37, 3505-3521, (1996).

\bibitem{RovelliDiffeo}M. Gaul and C. Rovelli, \textit{Loop Quantum Gravity and the Meaning of Diffeomorphism Invariance}, Towards quantum gravity. Proceedings, 35th International Winter School on theoretical physics, Polanica, Poland, Lect. Notes Phys. \textbf{541}, 277-324 , arXiv:gr-qc/9910079, (1999).

\bibitem{Diracalgebra2}P. A. M. Dirac, \textit{The theory of gravitation in Hamiltonian form}, Proc. Roy. Soc. A \textbf{246}, 333-343, (1958).

\bibitem{hypersurfacedef}M. Bojowald, S. Brahma, U. Buyukcam, and F. D'Ambrosio, \textit{Hypersurface-deformation algebroids and effective space-time models}, Phys. Rev. D \textbf{94}, 104032, (2016), arXiv:gr-qc/1610.08355.


\bibitem{amari} S. Amari, Information Geometry and Its Applications, Springer Japan (2016).

\bibitem{ay} N. Ay, J. Jost, H. V\^an L\^e, L. Schwachh\"ofer, Information Geometry, Springer International (2017).
 
\bibitem{barbaresco} F. Barbaresco, Geometric Theory of Heat from Souriau Lie Groups Thermodynamics and Koszul Hessian Geometry : Applications in Information Geometry for Exponential Families, Entropy 2016, 18 (11), 386.

\bibitem{regain} K. V. Kucha\v{r}, Geometrodynamics regained: A Lagrangian approach, Journal of Mathematical Physics 15, 708 (1974).

\bibitem{jarz} C. Jarzynski, Nonequilibrium Equality for Free Energy Differences, Physical Review Letters 78, 2690 (1997).
%

\bibitem{helein3} F. H\'elein and J. Kouneiher, \textit{The notion of observable in the covariant Hamiltonian formalism for the calculus of variations with several variables}, Adv. Theor. Math. Phys. 8, 735-777, (2004).

\bibitem{helein4} F. H\'elein, \textit{Multisymplectic formalism and the covariant phase space}. In R. Bielawski, K. Houston, \& M. Speight (Eds.), \textit{Variational Problems in Differential Geometry} (London Mathematical Society Lecture Note Series, pp. 94-126). Cambridge: Cambridge University Press (2011).

%
%

\bibitem{gaugefix1} J. M. Pons, D. C. Salisbury and L. C. Shepley, \textit{Gauge transformations in the Lagrangian and Hamiltonian formalisms of generally covariant theories}, Phys. Rev. D55, 658-668 (1997) [gr-qc/9612037].

\bibitem{gaugefix2} J. M. Pons and L. C. Shepley, \textit{Evolutionary laws, initial conditions, and gauge fixing in constrained systems}, Class. Quant. Grav. 12, 1771 (1995) [gr-qc/9508052].

\bibitem{timedepGF1} J. M. Evans, \textit{On Dirac's methods for constrained systems and gauge-fixing conditions with explicit time dependence}, Phys. Lett. B, vol. 256, n. 2 (1991).

\bibitem{timedepGF2} J. M. Evans and P. A. Tuckey, \textit{A Geometrical Approach to Time-Dependent Gauge-Fixing}, Int. J. Mod. Phys. A8, 4055-4069, (1993), arXiv:hep-th/9208009. 

\bibitem{timedepGF3} J. M. Evans and P. A. Tuckey, \textit{Geometry and dynamics with time dependent constraints}, in \textit{Geometry of constrained dynamical systems}, Proceedings, Conference, Cambridge, UK, June 15-18, 1994, 285-292, hep-th/9408055.

\bibitem{Henneaux} M. Henneaux, C. Teitelboim, \emph{Quantization of Gauge Systems}, Princeton University Press (1992).

\bibitem{GFTgibbs} I. Kotecha and D. Oriti, \textit{Statistical Equilibrium in Quantum Gravity: Gibbs states in Group Field Theory}, New J. Phys. \textbf{20}, 073009, arXiv:gr-qc/1801.09964, (2018).

\bibitem{iglesias} P. Iglesias and J. Souriau. \textit{Heat, cold and geometry}. In M. Cahen, M. DeWilde, L. Lemaire, and L. Vanhecke (Eds.), Differential geometry and mathematical physics, Mathematical Physics Studies, vol. 3, pp. 37-68. Reidel, Dordrecht, 1983. 

\bibitem{barba} Barbaresco F. (2019) \textit{Souriau Exponential Map Algorithm for Machine Learning on Matrix Lie Groups}. In: Nielsen F., Barbaresco F. (eds) Geometric Science of Information. GSI 2019. Lecture Notes in Computer Science, vol 11712. Springer, Cham

\bibitem{c1} D. Vey, \textit{Multisymplectic formulation of vielbein gravity: I. De Donder-Weyl formulation, Hamiltonian $(n-1)$-forms}, Class.Quant.Grav. 32 (2015) no.9, 095005, arXiv:1404.3546 [math-ph].

\bibitem{c1.2} D. Vey, \textit{Multisymplectic gravity}, PhD Thesis. Multisymplectic Geometry and Classical Field Theory. Mathematical Physics [math-ph]. Univerist\'e Paris Diderot Paris 7, 2012. English.

\bibitem{c1.3} J. Gaset and N. Rom\'an-Roy, \textit{Multisymplectic unified formalism for Einstein-Hilbert gravity}, Journal of Mathematical Physics 59, 032502 (2018).

\bibitem{c1.4} J. Struckmeier, J. M\"unch, D. Vasak, J. Kirsch, M. Hanauske, H. Stoecker, \textit{Canonical Transformation Path to Gauge Theories of Gravity}, Phys. Rev. D 95, 124048 (2017), arXiv:1704.07246 [gr-qc].

\bibitem{c2} C. G. Torre, \textit{Is General Relativity an ``Already Parametrized'' Field Theory?}, Phys. Rev. D46: 3231-3234 (1992), arXiv:hep-th/9204014.


\bibitem{c4a} A. Ibort and A. Spivak, \textit{Covariant Hamiltonian first-order field theories with constraints, on manifolds with boundary: The case of Hamiltonian dynamics}, Banach Center Publications, 110 (2016), 87-104.

\bibitem{c4b} A. Ibort, A. Spivak, \textit{On A Covariant Hamiltonian Description of Palatini's Gravity on Manifolds with Boundary}, arXiv:1605.03492 [math-ph], 2016.

\bibitem{c4c} M. Asorey, A. Ibort and A. Spivak, \textit{Admissible Boundary Conditions for Hamiltonian Field Theories}, International Journal of Geometric Methods in Modern Physics Vol. 14, No. 08, 1740006 (2017).

\bibitem{c5} W. Donnelly, L. Freidel, \textit{Local subsystems in gauge theory and gravity}, JHEP 09 (2016) 102.\\
L. Freidel, A., Perez, D. Pranzetti, \textit{The loop gravity string}, Phys. Rev. D 95, 106002 (2017).\\
L. Freidel, E. livine, D. Pranzetti, \textit{Gravitational edge modes: from Kac-Moody charges to Poincar\'e networks}, Class.Quant.Grav. 36 (2019) no.19, 195014.\\
L. Freidel, E. livine, D. Pranzetti, \textit{Kinematical Gravitational Charge Algebra}, arXiv:1910.05642 [gr-qc] (2019).

\bibitem{c6} W. Wieland, \textit{Fock representation of gravitational boundary modes and the discreteness of the area spectrum}, Ann. Henri Poincar\'e 18 (2017), 3695.\\
W. Wieland, \textit{Generating functional for gravitational null initial data}, arXiv:1905.06357 [gr-qc] (2019).

\bibitem{c7} M. Arjang, J. A. Zapata, \textit{Multisymplectic effective General Boundary Field Theory}, Class.Quant.Grav. 31 (2014).

\bibitem{c8} S. Ariwahjoedi, V. Astuti, J. S. Kosasih, C. Rovelli, F. P. Zen, \textit{Statistical discrete geometry}, arXiv:1607.08629 [gr-qc] (2016).
 
\end{thebibliography}


\end{document}